  \providecommand\BibTeX{{%
    \normalfont B\kern-0.5em{\scshape i\kern-0.25em b}\kern-0.8em\TeX}}}
\newcommand{\Ptil}{\widetilde{P}}
\newtheorem{corollary }{Corollary}
\definecolor{darkgreen}{rgb}{0.0, 0.2, 0.13}
\definecolor{asparagus}{rgb}{0.53, 0.66, 0.42}
\definecolor{cambridgeblue}{rgb}{0.64, 0.76, 0.68}
\newcommand{\maria}[1]{\textcolor{magenta}{#1}}
\newcommand{\rank}{\operatorname{rank}}
\acrodef{LTR}{learning to rank}
\acrodef{MRP}{marginal rank probability\xspace}
\newcommand{\method}{FELIX\xspace}
\newcommand{\methodn}[1]{FELIX$_{iter=#1}$\xspace}
\newcommand{\bvn}{Birkhoff-von Neumann\xspace}
\newcommand{\headernodot}[1]{\vspace{1mm}\noindent\textbf{#1}}
\newcommand{\header}[1]{\headernodot{#1.}}
\acrodef{IR}{information retrieval}
\acrodef{CLTR}{counterfactual learning to rank}
\acrodef{RBP}{ranked-based precision}
\acrodef{ERR}{expected-reciprocal rank}
\acrodef{PBM}{position-based user model}
\acrodef{PL}{Plackett-Luce}
\acrodef{BvN}{Birkhoff-von Neumann}
\author{%
Maria Heuss
}
\affiliation{%
  \institution{
  University of Amsterdam
  \city{Amsterdam}
  \country{The Netherlands}%
  }
}  
\email{m.c.heuss@uva.nl}
\author{%
Fatemeh Sarvi
}
\affiliation{%
  \institution{
  AIRLab, University of Amsterdam
  \city{Amsterdam}
  \country{The Netherlands}%
  }
}  
\email{f.sarvi@uva.nl}
\author{%
Maarten de Rijke%
}
\affiliation{%
  \institution{
  University of Amsterdam
  \city{Amsterdam}
  \country{The Netherlands}%
  }  
}  
\email{m.derijke@uva.nl}
\begin{document}

\title{Fairness of Exposure in Light of Incomplete Exposure Estimation}

\renewcommand{\shortauthors}{Heuss, et al.}

\begin{abstract}
Fairness of exposure is a commonly used notion of fairness for ranking systems. It is based on the idea that all items or item groups should get exposure  proportional to the merit of the item or the collective merit of the items in the group. 
Often, stochastic ranking policies are used to ensure fairness of exposure. 
Previous work unrealistically assumes that we can reliably estimate the expected exposure for all items in each ranking produced by the stochastic policy.
In this work, we discuss how to approach fairness of exposure in cases where the policy contains rankings of which, due to inter-item dependencies, we cannot reliably estimate the exposure distribution. In such cases, we cannot determine whether the policy can be considered fair. 
Our contributions in this paper are twofold.
First, we define a method called \method{} for finding stochastic policies that avoid showing rankings with unknown exposure distribution to the user without having to compromise user utility or item fairness. 
Second, we extend the study of fairness of exposure to the top-$k$ setting and also assess \method{} in this setting.
We find that \method{} can significantly reduce the number of rankings with unknown exposure distribution without a drop in user utility or fairness compared to existing fair ranking methods, both for full-length  and  top-$k$ rankings. 
This is an important first step in developing fair ranking methods for cases where we have incomplete knowledge about the user's behaviour. 
\end{abstract}

\keywords{Fair ranking; Exposure estimation; Learning to rank}

\begin{CCSXML}
<ccs2012>
  <concept>
    <concept_id>10002951.10003317.10003359</concept_id>
    <concept_desc>Information systems~Evaluation of retrieval results</concept_desc>
    <concept_significance>300</concept_significance>
    </concept>
  <concept>
    <concept_id>10002951.10003317.10003338</concept_id>
    <concept_desc>Information systems~Retrieval models and ranking</concept_desc>
    <concept_significance>300</concept_significance>
    </concept>
 </ccs2012>
\end{CCSXML}

\ccsdesc[300]{Information systems~Evaluation of retrieval results}
\ccsdesc[300]{Information systems~Retrieval models and ranking}

\maketitle

\acresetall


\section{Introduction}

There has been increased interest in fair ranking systems, as witnessed by the number of publications~\citep{ekstrand-2021-fairness,zehlike2021fairness}, the topic's attention during keynotes leading conferences~\citep{castillo2019fairness, joachims2021fairness}, and challenges such as the TREC Fair Ranking track \cite{trec-fair-ranking-2021}.
Several particularities about rankings make this task especially challenging. 

\begin{figure}\label{fig:visualization_felix}
     \centering
     \begin{subfigure}[b]{0.22\textwidth}
         \centering
         \includegraphics[width=\textwidth]{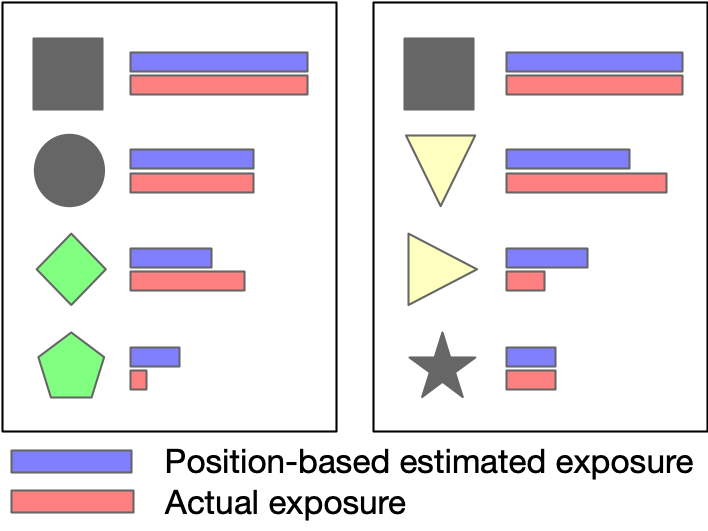}
         \caption{}
         \label{fig:unknown_exposure}
     \end{subfigure}
     \hfill
     \begin{subfigure}[b]{0.22\textwidth}
         \centering
         \includegraphics[width=\textwidth]{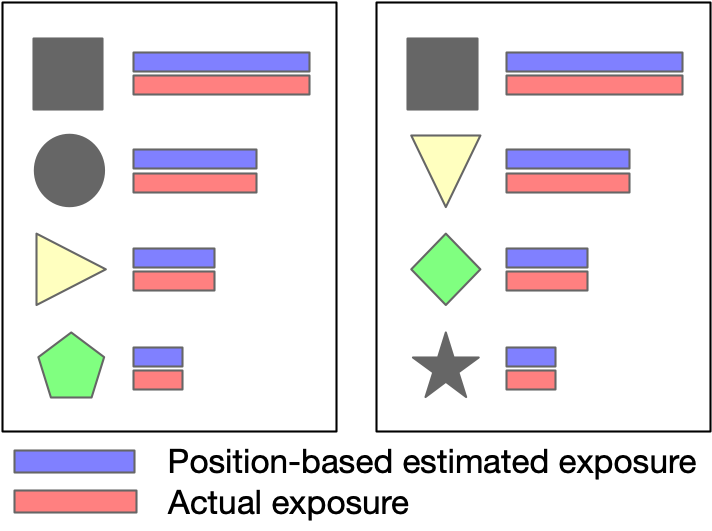}
         \caption{}
         \label{fig:known_exposure}
     \end{subfigure}
        \caption{
        Visualization of rankings with unknown exposure distribution which are due to inter-item dependencies between items marked by the same color and similar shapes (a).  By shuffling some items between rankings in the stochastic ranking policy these dependencies can be reduced such that the estimated exposure agrees with the actual exposure that each item gets (b).
}
\end{figure}

First, often ranking systems act as a tool for two-sided marketplaces,  such as job markets \cite{geyik2019fairness} or music recommender systems~\citep{mehrotra2018towards}. On one side, users want relevant item recommendations. On the other side, items or their providers are interested in being exposed to as many users as possible.
Second, biases like position bias can cause a traditional deterministic ranking 
to amplify small differences in predicted scores into vast differences in user attention~\citep{singh2018fairness, biega2018equity}. 

An important line of research on fairness in ranking deals with \emph{fairness of exposure}. 
Given a ranking, we can estimate how much exposure each item gets in expectation during inference. 
We call this the \emph{exposure distribution} of the ranking. 
\citet{singh2018fairness} define several notions of fairness of exposure for rankings, among them \emph{disparate treatment}. This notion defines a stochastic ranking policy to be fair if each item or item-group gets expected exposure proportional to its merit. 
We will mostly focus on individual fairness, where we want to provide each individual item with exposure relative to its merit. 

\header{Incomplete exposure estimation}
Previous methods for fairness of exposure assume that we can estimate the exposure distribution of any ranking in the set of all possible rankings. 
For this, a user model like the position-based model \cite{singh2018fairness, wang2020fairness,biega2018equity, yadav2021policy},
or the ERR-based model \cite{diaz2020evaluating}
can be used. 
However, there are cases where, due to inter-item dependencies
that are not accounted for by any of the existing user models,
for certain rankings, user-behaviour does not follow the user model; for such rankings we cannot estimate the exposure distribution accurately.  
See Fig.~\ref{fig:unknown_exposure} for an illustration.
E.g., \citet{sarvi-2021-understanding-arxiv} show that visual outliers can have a great impact on the exposure distribution within a ranking, since such outliers attract more user attention. 
This phenomenon is an example of inter-item dependencies where one item can be perceived as an outlier in the context of items it is presented together with. It can cause the exposure distribution to diverge from the distribution assumed by the user model.

Simply ignoring the incomplete knowledge about the exposure of some of the rankings would imply that we cannot guarantee fairness. 
Also, by ignoring potentially incomplete exposure estimation, we might introduce a new kind of bias into the collected click data, since items that got more exposure than estimated will have propensity values that are too high, leading to overestimation of their relevance.  
One solution would be to obtain
a more accurate user browsing model by estimating the exposure distribution of rankings that do not follow the user model, through a large-scale user study. 
To the best of our knowledge no such studies have been conducted.
It is also not clear whether one can always reliably estimate the exposure distribution for all possible rankings.  

Instead, we propose to avoid showing rankings with unknown exposure distribution to the user by reducing their weight in the probability distribution of the stochastic ranking policy. 

\header{Fair top-$k$ ranking} So far, the literature on fairness of exposure has mostly focused on full-length rankings. 
Top-$k$ rankings are well studied in the general \ac{IR} literature~\citep{deshpande-2008-efficient,chen-2015-top-k,zehlike2017fa,yang-2012-top-k}; many real-world ranking applications require us to expose just a short list of items.
Often there are more relevant items than can be shown to the user,  hence it is important to consider fairness of exposure for this set-up as well.  
Although there have been few approaches to \emph{fair top-$k$ ranking}~\citep{zehlike2022fair,zehlike2017fa},  most are concerned with demographic parity, rather than merit-based fairness of exposure.  

\header{Our contributions}
In this work we develop a method to find ranking policies that avoid presenting rankings with unknown exposure distribution, while still optimizing for user utility and fairness. Under the assumption that inter-item dependencies are the reason for the shift in exposure, our method works by shuffling items between different rankings to avoid presenting them in a context where they disturb the position-based exposure distribution, as illustrated in Fig.~\ref{fig:known_exposure}.

We also present what we believe to be the first approach towards fairness of exposure in the top-$k$ setting for the convex optimization approach towards fairness. 
We generalize the \bvn theorem and use this to extend~\citep{singh2018fairness} to the top-$k$ setting. 
 
To summarize, our main contributions are as follows: 
\begin{itemize}[leftmargin=*,nosep]
\item We introduce the task of fairness of exposure in light of incom\-plete exposure estimation and define a novel method \method that provides us with a fair ranking policy that avoids rankings with unknown exposure distribution. 
\item To make \method{} applicable to a broader range of use cases, we extend the constrained optimization approach to fairness of exposure to the top-$k$ case. 
\item We test and compare \method{} on the outlier use case introduced in \cite{sarvi-2021-understanding-arxiv} and show big improvements over other top-$k$ fair ranking methods in terms of effectiveness in avoiding rankings containing outliers, while staying within the fairness constraints. 
\end{itemize}


\section{Related Work}

\textbf{Fairness in ranking.}
For a detailed overview of fair ranking we refer to \cite{ekstrand-2021-fairness,zehlike2021fairness}. 
\citet{yang2017measuring} seem to have been the first to formalize fairness for rankings in a rank-aware manner, by calculating parity for different top-$k$ cut-offs and summing over these values with a rank-based discount. 
\citet{zehlike2017fa,  zehlike2022fair} discuss representational fairness for top-$k$ rankings and define a re-ranking algorithm that ensures a share of items from the protected groups in every prefix of the top-$k$, while
\citet{celis2017ranking} formulate the problem as a constrained optimization problem.  
These papers look for a deterministic ranker, not a stochastic ranking policy, and emphasize on representational fairness and demographic parity. 

\citet{singh2017equality} introduce the notion of expected exposure and define fairness of exposure with respect to demographic parity and equal opportunity, where the expected exposure is calculated w.r.t.\ position bias.  
Later work~\citep{singh2018fairness} defines different types of fairness of exposure w.r.t.\ disparate impact and disparate treatment, and address the task as a constrained optimization problem.
\citet{biega2018equity} define equity of attention as an alternative notion of fairness for rankings that is also based on exposure; they also address the task as a constrained optimization problem. 
\citet{wang2020fairness} also consider fairness of exposure  combined with diversity in rankings. 
We build on \citep{singh2018fairness} and use the non-uniqueness property of the Birkhoff-von Neumann decomposition that is also used in \citep{wang2020fairness} to produce more diverse rankings.  
Importantly, we reduce the probability that the user is shown a ranking with unknown exposure distribution rather than providing the user with more diverse rankings as in~\citep{wang2020fairness}. 

Another line of research aims to include fairness in the learning process by including a fairness objective in the objective function \citep{singh2019policy, zehlike2020reducing, diaz2020evaluating,vardasbi-2022-fairness}.
Since inter-item relationships are hard to model within the in-processing set-up, in our work we focus on a post-processing method for avoiding rankings with unknown exposure distribution and leave work on in-processing methods for the future. 

Another work that looks into the the topic of uncertainty within fair ranking is \cite{singh2021fairness},  which explores fairness of exposure when there is uncertainty about the merit.  In contrast to this work, we are considering uncertainty about the exposure of certain rankings. 

\header{Exposure estimation in ranking}
In \ac{CLTR} true estimation of exposure plays a central role~\citep{joachims2017unbiased}. 
Early work on CLTR corrects for position bias using exposure, estimated by a click model~\citep{chuklin-2015-click}, as the propensity to inversely weight the importance of clicks~\cite{joachims2017unbiased, wang2016learning}. 
More recent work focuses on estimating examination probabilities \cite{agarwal2019estimating, ai2018unbiased, fang2019intervention, wang2018position,vardasbi-2020-inverse,vardasbi-2021-mixture-based},  which also correlates with exposure, correcting for more types of bias.
Recent work on learning fair rankings from implicit feedback \citep{yadav2021policy} simultaneously corrects for position bias and implicit biases in the data. 
There is no prior work on how to adapt these models for the case where certain rankings do not follow the general user model. 

Prior work has shown that exposure might be impacted by other factors than just position and the relevance of other items.  
\citet{yue2010beyond} observe that visual attractiveness can impact the exposure that items get; \citet{sapiezynski2019quantifying} acknowledge that the attention that users give to items in a ranking depends on context; and \citet{wang2021clicks} address the impact of click bait items on exposure distribution.  
\citet{sarvi-2021-understanding-arxiv} show that the existence of visual outliers in rankings can skew the exposure distribution amongst the items, causing outliers to draw more attention than estimated by the position-based user model that non-outlier rankings seem to follow.

In this work, we focus on similar but more general use cases, where due to inter-item relationships the exposure distribution for some rankings differs from the generally assumed distribution,  that can be described through existing user models. 


\section{Background}
\label{sec:background}
We introduce preliminaries in fair ranking that form the basis for a new method for ranking under fairness constraints, while avoiding to present rankings with unknown exposure distribution.

\subsection{Stochastic ranking policies}
\label{sec:background_prob_pol}
Depending on the definition of fairness being used,  often a single deterministic ranking cannot achieve fairness~\citep{diaz2020evaluating, biega2018equity}.
Instead, probabilistic rankers can be used to provide a fair distribution of exposure among items.  
Given a query $q$ and set of candidate items, $\mathcal{D}_q = \{d_i\}_{i=1,\dots, n}$,  to be ranked, we define a \emph{stochastic ranking policy} $\pi_q$ as a probability distribution over all possible rankings 
$\mathcal{R}_{\mathcal{D}_q}$.  
That is,  $\pi$ assigns each ranking $\sigma_j\in \mathcal{R}_{\mathcal{D}_q}$ a probability $\pi_q(\sigma_j)$ that it will be shown to the user. 

To evaluate the fairness of a ranking policy we determine the \emph{expected exposure} $\epsilon(d_i \mid \pi_q)$ that each item $d_i$ \if0\maria{Removed item group}\fi obtains when enough rankings have been presented to users.  
To compute this, we need to assume a browsing model that explains the probability of a user visiting an item. \citet{diaz2020evaluating} adopt user models corresponding to the \ac{RBP} and \ac{ERR}, while \citet{singh2018fairness} use the \ac{PBM}.
We follow the latter, as it is commonly used in the fairness literature \citep{singh2018fairness,wang2020fairness,biega2018equity,yadav2021policy}. 
Assuming that the exposure of an item in a ranking,  $\epsilon(d_i \mid \sigma)$,  is purely based on its position, the \emph{expected exposure} $\epsilon(d_i \mid \pi_q)$ of document $d_i$ for policy $\pi_q$ can be calculated as: 
\begin{equation}
\begin{split} 
\epsilon(d_i \mid \pi_q)&= \mathbb{E}_{\sigma \sim \pi_q}   \epsilon(d_i \mid \sigma) \\
 &= \sum_{\sigma \in \mathcal{R}_{D_q}} \pi_q(\sigma) \cdot \epsilon(d_i \mid \sigma) \\ 
&= \sum_{\sigma \in \mathcal{R}_{D_q}} \pi_q(\sigma) \cdot \frac{1}{\log(1+\rank(d_i \mid \sigma))},
\end{split}
\end{equation}
where we assume that the exposure can be calculated based on the rank:
$ \epsilon(d_i \mid \sigma) = v (rank(d_i \mid \sigma))$ with exposure at rank $j$ given by $v(j) = \frac{1}{\log(1+j)}$.

\subsection{Fairness of exposure}
\label{sec:background_foe}
The definition of what constitutes a fair ranking may vary between application scenarios and types of biases being addressed \cite{zehlike2021fairness}.  
We focus on individual fairness, but our approach can easily be extended for group fairness. 
Our goal is to make sure that similar items receive a similar amount of exposure that is proportional to their merit. 
The \emph{merit} $u(d\mid q)$ of an item, $d\in \mathcal{D}$, indicates how much exposure it deserves to get from users with respect to query $q$. 
We define the merit of an item as its relevance to the query.  

The idea of \emph{fairness of exposure} \cite{singh2018fairness} is to provide each item with exposure $\epsilon$ that is proportional to its merit:
\begin{equation}
\frac{\epsilon(d_i \mid \pi_q)}{u(d_i \mid q)} = \frac{\epsilon(d_j \mid \pi_q)}{u(d_j\mid q)} \hspace{0.5cm} \forall d_i, d_j\in \mathcal{D}.
\label{equation:disparity}
\end{equation}

\subsection{Finding a stochastic policy under fairness constraints}\label{sec:constraint_optimization}
To be able to satisfy certain fairness constraints, we need to find a stochastic ranking policy (Section~\ref{sec:background_prob_pol}). 
\citet{singh2018fairness} approach the problem by optimizing for user utility under fairness constraints via linear programming.
As our method is based on theirs, we introduce it in more detail.  
For each query $q$ and item $d\in\mathcal{D}$, let 
$u(d \mid q)$ be its relevance to the user.
We define the \emph{utility} $U$ of a ranking policy $\pi_q$ as the expected utility to the user, when shown a ranking sampled from $\pi_q$: 
\begin{equation}
\begin{split}
U(\pi_q) 	&= \sum_{d \in \mathcal{D}} \epsilon(d \mid \pi_q) \cdot u(d \mid q) \\
& =  \mathbb{E}_{\sigma\sim \pi_q} \sum_{d \in \mathcal{D}} \epsilon(d \mid \sigma) \cdot  u(d \mid q).
\end{split}
\label{eq:utility}
\end{equation}
As we assume a position-based user model, $\epsilon(d\mid \sigma)$ is purely dependent on the position of $d$ in the ranking. 
Therefore, the expected utility $U$ can be calculated  based on the probabilities $P_{i,j}=P(d_i \text{ is placed at rank } j)$: 
\begin{equation}
\begin{split}
U(\pi_q) 	&=  \sum_{d_i \in \mathcal{D}} \sum_{j \in \{1,\ldots, n\}} P_{i.j} \cdot v(j)  \cdot u(d_i\mid q)\\ 
	&= \mathbf{u}^{T} \mathbf{P}  \mathbf{v},
\end{split}	
\label{eq:utility:rephrased}
\end{equation}
where $n=|\mathcal{D}|$ is the number of items in the ranking, $\mathbf{u}$ the vector containing the merit of each item,  $\mathbf{v}$ the vector containing the position bias at each position, and $\mathbf{P}=\{P_{i,j}\}_{i,j = 1,\dots, n}$.  
\citet{singh2018fairness} show that the disparate treatment constraint from Eq.~\eqref{equation:disparity} can be formulated as a linear constraint in $\mathbf{P}$, which yields a convex optimization problem of the form:
\begin{align} 
\begin{split}
\mathbf{P}	=  \operatorname{argmax}_{\mathbf{P}}  & ~\mathbf{u}^{T} \mathbf{P} \mathbf{v} \\
\text{such that }		&\mathds{1}^{T} \mathbf{P}=\mathds{1} \\
					&\mathbf{P} \mathds{1}=\mathds{1}  \\
					&0 \leq P_{i, j} \leq 1 \\
					&\mathbf{P} \text{ is fair}.
\end{split}
\label{eq:originaloptimization}
\end{align} 
A solution $\mathbf{P}$ to this optimization problem is a doubly stochastic matrix, called the \acfi{MRP} \emph{matrix}.
The solution $\mathbf{P}$ needs to be transformed into an executable stochastic ranking policy.  
The Birkhoff-von Neumann theorem \citep{birkhoff1946tres} gives us a constructive proof that such a matrix can be decomposed into a convex sum of $M\le n^2-n+1$ permutation matrices: 
\begin{equation}
\mathbf{P} = \!\sum_{m=1,\dots, M} \!\alpha_m P_{\sigma_m} \text{ such that} \sum_{m=1,\dots, M} \!\alpha_m = 1 \, (0\le \alpha_m \le 1). \label{eq:original_bvn}
\end{equation}
Since each permutation matrix corresponds to some ranking, we denote the permutation matrix corresponding to $\sigma$ by $P_\sigma$.  
 
With this we have found a stochastic policy $\pi$ with $\pi(\sigma_m) = \alpha_m$ and  $\pi(\sigma) = 0$ for all $\sigma$ not contained in this convex sum.  
Note that this decomposition is not necessarily unique; in Section~\ref{sec:method_felix} below we will make use of this fact.

\subsection{The impact of outliers on the exposure in rankings}\label{sec:background_outliers}

\citet{sarvi-2021-understanding-arxiv} provide evidence that commonly made assumptions on the user-behaviour might not hold when the presented ranking contains visible outliers that might attract the attention of the user.  
Since outliers are an example where inter-item dependencies between documents can change the exposure distribution among the items in a ranked list, we work with this example for our experiments in Section~\ref{sec:experiments}. 
We follow the set-up of \citep{sarvi-2021-understanding-arxiv}, where the authors assume that outliers can be determined through outlier detection on a specific visual item feature $ g(d)$ that might impact the user's perception of an item. 
In the case of scholarly search, which is used as an example in the experiments, such a feature could be the number of citations that each document has.  

Outliers are considered in a context $C\subset \mathcal{D}$ of  items that are presented together, which could for instance be the top-$k$ that is presented in a single search engine result page (SERP).
Given such a context $C = \{d_1, \dots, d_k\} \subset \mathcal{D}$,  we use  the features, $ g(d_1), \dots, g(d_k)$,  as input for the outlier detection.  
\citet{sarvi-2021-understanding-arxiv} find that the performance of their method for removing outliers from the rankings is not very sensitive to the outlier detection method.
For simplicity, we will therefore use the Z-score:
\begin{equation}
z(g_i) = \frac{g_i - \mu}{s},
\end{equation}
where $g_i = g(d_i)$,  and $\mu=\frac{1}{k}\sum_{i=1}^k g_i$ and $s=\sqrt{\frac{1}{k}\sum_{i=1}^k (g_i-\mu)^2}$ denote the mean and standard deviation of the scores in that context.  
Given these Z-scores, we define an item $d_i$ to be an outlier if $|z(g_i)| > \lambda$, where $\lambda$ can be chosen dependent on the sensitivity towards outlier items. 
Here, we diverge slightly from \citep{sarvi-2021-understanding-arxiv}, who use a more complex outlier detection method. 

\medskip\noindent%
Next, we introduce an extension to the convex optimization approach to fairness of exposure from Section~\ref{sec:constraint_optimization} for top-$k$ rankings.  
We use the definition of fairness of exposure with respect to disparate treatment from Section~\ref{sec:background_foe} and work with stochastic policies from Section~\ref{sec:background_prob_pol}.  
We also develop a method that avoids displaying rankings with unknown exposure distribution, using the outlier use case from Section~\ref{sec:background_outliers} for our experiments in Section~\ref{sec:experiments}. 


\section{Fairness of exposure under incomplete exposure estimation}
\label{sec:method}

As discussed in Section~\ref{sec:background_prob_pol}, previous work on fair ranking assumes that we can estimate the exposure distribution for all rankings in a policy with one user model. Often, the position-based user model is used.  But there are cases where these assumptions do not hold up. 
\citet{sarvi-2021-understanding-arxiv} show that the existence of outliers in a displayed ranking can strongly impact the exposure distribution of the ranking.  
To the best of our knowledge, there is no prior work on estimating the exposure distribution of such rankings.  
If such \emph{rankings with unknown exposure distribution} are part of a stochastic ranking policy (i.e., if such a ranking has a non-zero probability of being presented to the user), we cannot determine whether the policy is fair.  Therefore, for attaining fair stochastic policies we should avoid using such rankings.  This introduces the task of fair ranking under incomplete exposure estimation. 

In this section we develop a method for the task of \textbf{F}airness of \textbf{E}xposure in \textbf{L}ight of \textbf{I}ncomplete e\textbf{X}posure estimation, \method{}, that provides a ranking policy that avoids rankings with unknown exposure distribution without damaging fairness or utility.   
\method{} is based on the assumption that the shift in the exposure distribution is caused by inter-item relationships  between the items that are ranked together.  
Hence, depending on the context an item is presented in, it could either follow the position-based exposure distribution or it could draw more or less exposure than assumed.  In the example,  an outlier in a ranking might draw more attention than a non-outlier item at the same position,  as demonstrated in~\citep{sarvi-2021-understanding-arxiv}.  When presented in a more diverse ranking, the same item might not be considered an outlier any more and follow the assumed position-based exposure distribution.   
Compared to the method for removing outliers from the top-$k$ in \citep{sarvi-2021-understanding-arxiv}, \method{} is more generally applicable to any use case where, due to inter-item dependencies, some rankings have unknown exposure distribution.  
Also, \method{} allows us to consider outliers in the local context that they are presented in, while \citeauthor{sarvi-2021-understanding-arxiv}'s approach can only remove outliers with respect to the global context of all items in the list.

Since the context in which items are presented in plays a central role for our task,  naturally we are interested in our method to work in the top-$k$ setting.  Therefore, we first generalize the constrained optimization approach towards fairness of exposure, introduced in~\cite{singh2018fairness}, to the top-$k$ setting and present an efficient way to determine a fair policy. 
Then we present our method \method{} that uses iterative re-sampling to determine a stochastic policy that avoids presenting rankings with unknown exposure distribution to the user, while staying within the fairness constraints. 

\subsection{Fair ranking in the top-$k$ setting}
\label{sec:method_topk}

We will now extend the convex optimization approach to fairness to the top-$k$ setting.   
Let $n$ be the number of candidate items to be ranked and  $k\le n$ be the number of ranks of the desired rankings. 
As explained in Section~\ref{sec:constraint_optimization}, searching for a stochastic policy under fairness constraints can be done by first searching for a marginal rank probability matrix $\mathbf{P}$ that satisfies the fairness constraints, and then decomposing this matrix. 
Since we are interested in the top-$k$ case, $\mathbf{P}=\{P_{i,j}\}_{i=1,\dots n,j=1,\dots k}$ is now a $n\times k$ matrix,  where $P_{i,j}$ is the probability that item $i$ is placed at rank $j$.   
With $\mathbf{u}$ the $n$-dimensional utility vector and $\mathbf{v}$ the $k$-dimensional vector containing the examination probability at each of the top-$k$ positions we can solve the following linear program:
\begin{equation}
\label{eq:convex_optimization_topk}
\begin{split}
\mathbf{P}	=  \operatorname{argmax}_{\mathbf{P}} &~\mathbf{u}^{T} \mathbf{P} \mathbf{v}  \\
\text{such that }	& \mathds{1}_n^T\mathbf{P} = \mathds{1}_k  \\ 
	& \mathbf{P}\mathds{1}_k\le\mathds{1}_n  \text{ (element-wise inequality)}  \\
					&0 \leq P_{i, j} \leq 1  \\ 
					&\mathbf{P} \text{ is fair}.
\end{split}
\end{equation}
Given the \acl{MRP} matrix $\mathbf{P}$,  we want to determine a stochastic policy given by a distribution over actual rankings. In the $n\times n$ setting, the \acf{BvN} decomposition provides us with an algorithm to determine such a distribution.  The following result generalizes the \acs{BvN} theorem to the $n\times k$ setting where $n$ is not necessarily equal to $k$. 

\begin{theorem}
\label{theorem:generalized_bvn}
Any matrix $P=\{a_{i,j}\}_{i\le n,j\le k}$  
with $\forall i,j: 0 \le a_{i,j} \le 1$,  $\forall j: \sum_{i=1}^n a_{i,j} = 1$ and   $\forall i: \sum_{j=1}^k a_{i,j} \le 1$ can be written as the convex sum $P=\sum_{l=1}^m \alpha_l \cdot P_l$ of permutation matrices $P_l$ with coefficients $\alpha_l \in [0,1]$ such that $\sum_{l=1}^m \alpha_l=1$.  
\end{theorem}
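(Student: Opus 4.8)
The plan is to reduce the statement to the classical \bvn theorem by embedding $P$ into a square doubly stochastic matrix. Here a ``permutation matrix'' $P_l$ of size $n\times k$ should be read as a $0/1$ matrix with exactly one $1$ in each column and at most one $1$ in each row, i.e.\ an injective assignment of the $k$ ranks to distinct items, which is precisely a top-$k$ ranking. The strategy is: (i) augment $P$ with $n-k$ extra columns so that the resulting $n\times n$ matrix $\Ptil$ is doubly stochastic; (ii) invoke the square \bvn theorem on $\Ptil$; and (iii) delete the extra columns from each permutation matrix in the decomposition and check that what remains is a valid $n\times k$ permutation matrix, with the coefficients unchanged.

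For step (i), I would define the row deficiencies $r_i = 1-\sum_{j=1}^k a_{i,j}$, which are nonnegative by the hypothesis $\sum_{j=1}^k a_{i,j}\le 1$. Summing over all rows and exchanging the order of summation gives $\sum_{i=1}^n r_i = n - \sum_{j=1}^k\sum_{i=1}^n a_{i,j} = n-k$, using the column-sum hypothesis $\sum_{i=1}^n a_{i,j}=1$. Hence the total deficiency matches exactly the number $n-k$ of columns I must add, each of which should sum to $1$. I would fill the augmenting $n\times(n-k)$ block greedily (a northwest-corner / transportation fill) with row supplies $r_i$ and unit column demands. Because the running supply of any row never exceeds $r_i\le 1$, every entry placed is at most $1$, so each added entry automatically lies in $[0,1]$, and the matching totals guarantee the fill terminates with every added column summing to $1$. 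The resulting $\Ptil$ then has all row sums and all column sums equal to $1$, with entries in $[0,1]$.

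For steps (ii)--(iii), the square \bvn theorem gives $\Ptil=\sum_{l=1}^m \alpha_l Q_l$ with $n\times n$ permutation matrices $Q_l$ and $\alpha_l\in[0,1]$, $\sum_l\alpha_l=1$. Restricting this identity to the first $k$ columns yields $P=\sum_{l=1}^m\alpha_l P_l$, where $P_l$ is $Q_l$ with its last $n-k$ columns deleted. Since each $Q_l$ has exactly one $1$ per row and per column, each $P_l$ has exactly one $1$ in every one of its $k$ columns and at most one $1$ in each of its $n$ rows, i.e.\ $P_l$ is a valid top-$k$ permutation matrix; the coefficients carry over unchanged.

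The main obstacle is step (i): one must be sure the padding is always feasible, i.e.\ that the deficiencies can be packed into $n-k$ columns of unit sum without any entry exceeding $1$. The two facts that make this go through are the exact total-deficiency count $n-k$ and the bound $r_i\le 1$; I would spell out the greedy fill explicitly to make the construction unambiguous. Everything after the padding is essentially bookkeeping on the classical decomposition, so no further difficulty is expected.
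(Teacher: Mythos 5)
Your proposal is correct and follows essentially the same route as the paper: augment $P$ with $n-k$ extra columns to obtain an $n\times n$ doubly stochastic matrix, apply the classical Birkhoff--von Neumann theorem to it, and restrict each permutation matrix in the resulting decomposition to the first $k$ columns, keeping the coefficients. The only difference is in the padding construction: where you fill the added block by a greedy northwest-corner transportation scheme (valid, but requiring a termination/feasibility argument), the paper simply spreads each row deficiency uniformly, placing $\bigl(1-\sum_{j'=1}^{k}a_{i,j'}\bigr)/(n-k)$ in every added entry of row $i$, which makes the verification of the row and column sums a one-line computation.
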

\begin{proof}
 In Lemma \ref{theorem:extension_doubly_stochastic} below, we show that  $P$ can be extended to a doubly stochastic matrix $P'$. We can use the \acs{BvN} decomposition for doubly stochastic matrices to find a decomposition for $P'$, which will induce a decomposition for $P$.  For details, see the Appendix.
\end{proof}

\noindent%
Here we say that $P'\in \mathbb{R}^{n'\times k'}$ is an \emph{extension} of $P\in \mathbb{R}^{n\times k}$ if $n'\ge n, k' \ge k$, and $P_{i,j}=P'_{i,j}$ for all $(i,j)$ with $i\le n$ and $j\le k$. We will denote this by $P'\vert_{i\le n, j\le k} = P$. 

\begin{lemma}
\label{theorem:extension_doubly_stochastic}
Let $P=\{a_{i,j}\}_{i\le n,j\le k}$ be a matrix  with the same properties as described in Theorem \ref{theorem:generalized_bvn} with $k\le n$. Then there is a matrix $P'=\{a'_{i,j}\}_{i\le n,j\le n}$ with $\forall i,j: 0 \le a'_{i,j} \le 1$ such that $P = P'\vert_{i\le  n, j\le k}$, 
and $\forall i: \sum_{j=1}^n a'_{i,j} = 1$ and $\forall j: \sum_{i=1}^n a'_{i,j} = 1$. 
\end{lemma}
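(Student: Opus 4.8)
The plan is to build the extension $P'$ by appending $n-k$ fresh columns to $P$, chosen so that every row and every column of the resulting $n\times n$ matrix sums to $1$. First I would record the \emph{row deficits} $e_i = 1 - \sum_{j=1}^k a_{i,j}$. The hypotheses $0\le a_{i,j}$ and $\sum_{j=1}^k a_{i,j}\le 1$ give $e_i\in[0,1]$ for every $i$. The key observation is a double-counting identity: summing the deficits and swapping the order of summation,
\[
\sum_{i=1}^n e_i = n - \sum_{i=1}^n\sum_{j=1}^k a_{i,j} = n - \sum_{j=1}^k\sum_{i=1}^n a_{i,j} = n - k,
\]
where the last equality uses that each of the $k$ columns of $P$ sums to $1$. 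Thus the total deficit is exactly $n-k$, the number of columns I still need to add, each of which must sum to $1$; so the total mass to be placed matches the total mass available.

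Next I would fill the $n-k$ new columns one at a time by a greedy water-filling procedure. Keeping a vector of residual deficits (initialized to $(e_1,\dots,e_n)$), to build one new column I distribute a total mass of exactly $1$ across the rows, assigning to each row $i$ an amount $c_i$ with $0\le c_i\le$ its current residual deficit, stopping once the column sum reaches $1$; this is possible because the residual deficits sum to at least $1$ as long as columns remain to be filled. I then subtract $c_i$ from each residual deficit. The invariant to carry through the induction is that after adding $t$ columns the residual deficits are nonnegative, are each at most $1$, and sum to $n-k-t$. The base case is the identity above, and the inductive step holds because subtracting $c_i\ge 0$ keeps each deficit in $[0,1]$ and lowers the total by exactly $1$.

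Finally I would verify that $P'$ has the required properties. Its first $k$ columns are untouched, so $P=P'\vert_{i\le n,\,j\le k}$ and those columns still sum to $1$; each appended column sums to $1$ by construction; and once all $n-k$ columns have been added the residual deficits vanish, so every row sum equals $1$. Every entry lies in $[0,1]$ automatically: the original entries by hypothesis, and each new entry $c_i$ is bounded by the residual deficit of its row, which never exceeds $1$. Hence $P'$ is doubly stochastic, as claimed.

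The step I expect to require the most care is guaranteeing that the new entries stay in $[0,1]$ while simultaneously matching \emph{both} margins. This is exactly the feasibility of a transportation problem with unit column-supplies and row-demands $e_i$: the integer total deficit $n-k$ makes supply equal demand, and the bound $e_i\le 1$ is precisely what forces every individual new entry to be at most $1$. If one prefers an abstract justification over the explicit water-filling, the same conclusion follows from the nonemptiness of this transportation polytope (equivalently a max-flow/min-cut check), but the greedy construction keeps the argument self-contained.
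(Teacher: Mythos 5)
Your proof is correct, and it rests on the same two facts as the paper's: the row deficits $e_i = 1-\sum_{j=1}^{k} a_{i,j}$ lie in $[0,1]$, and they sum to exactly $n-k$ because each of the $k$ given columns sums to $1$. The difference is in how this deficit mass is placed into the $n-k$ appended columns. The paper spreads each row's deficit uniformly, setting $a'_{i,j} = \bigl(1-\sum_{j'=1}^{k} a_{i,j'}\bigr)/(n-k)$ for all $j>k$ (see the proof of Lemma~\ref{theorem:extension_doubly_stochastic} and its expanded version in Appendix~\ref{sec:proofs_gen_bvn}); with this closed form, the row sums, column sums, and entry bounds each follow from a one-line computation and no induction is needed. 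You instead fill the new columns sequentially by water-filling, carrying the invariant that residual deficits stay in $[0,1]$ and total $n-k-t$ after $t$ columns; this is sound, though to be airtight you should pin down the greedy step (e.g., process rows in order and set $c_i=\min(r_i,\text{mass remaining})$) so that ``stopping once the column sum reaches $1$'' is a well-defined operation. What your longer route buys is that it exhibits the entire transportation polytope of valid extensions rather than one point of it, making explicit that the extension is highly non-unique. That observation is not idle: the paper itself exploits exactly this freedom in Section~\ref{sec:efficient_bvn_implementation}, where Algorithm~\ref{alg:gen_bvn} replaces the full $n\times n$ extension by a single aggregated deficit column treated as a vertex of multiplicity $n-k$ in the matching step. (The degenerate case $k=n$, where no columns are appended and all deficits vanish, is trivially fine in both arguments.)
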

\begin{proof}
Define $P' = \{a'_{i,j}\}_{i\le n,j\le n}$ as 
\begin{equation}
a'_{i,j} = \left\{
\begin{array}{ll}
a_{i,j} & \text{if }j \le k \\
\frac{1 - \sum_{j'=1}^{k} a_{i,j'}}{n - k} & \text{if } j > k. \\
\end{array}
\right. 
\end{equation}
Then $P'\vert_{i\le n, j\le k} = P$ by definition.  $P'$ satisfies all the requirements from the lemma. A proof of this can be found in the Appendix. 
\end{proof}

\noindent%
By transposing $A$ we can show that the Lemma also holds if $k>n$. 

\subsection{An efficient implementation of the generalized Birkhoff-von Neumann decomposition }\label{sec:efficient_bvn_implementation}

For an implementation of the generalized \acl{BvN} theorem, one can in theory use the proof of Theorem~\ref{theorem:generalized_bvn} and extend the \ac{MRP}-matrix, that we obtained by solving the convex optimization problem from Eq.~\ref{eq:convex_optimization_topk}, to a full $n\times n$-matrix. 
This matrix can then be decomposed into the convex sum of permutation matrices with help of the \acs{BvN} theorem for doubly stochastic matrices after which we can restrict the matrices again to the first $k$ columns.  
Since the complexity of the \acs{BvN} decomposition for square matrices is $\mathcal{O}(n^4\sqrt{n})$ \cite{hopcroft-1973-algorithm, johnson1960algorithm} and hence infeasible for large $n$,  we propose an alternative implementation for $n\times k$ or $k\times n$ matrices with $k<n$, that can be implemented with time complexity $\mathcal{O}(k^3 n^2)$. 

Algorithm~\ref{alg:gen_bvn} gives a structured overview of our algorithm for the generalized \acs{BvN} decomposition.
We start off by noting that the way in which we extended the doubly stochastic matrix from $P$ in the proof of Lemma~\ref{theorem:extension_doubly_stochastic} is not unique.
For any index pair $(i,j), (i',j')$ with $j,j'> k$ we can subtract some value $\beta$ from $a'_{i,j}$ and $a'_{i',j'}$, while adding the same value to $a'_{i',j}$ and $a'_{i,j'}$.  
The resulting matrix will have the same properties as $P'$ and will also be an extension of $P$. 
Therefore, instead of extending $P$ to a full doubly stochastic matrix, we can extend it to an $n\times (k+1)$ matrix $\Ptil$,  where the last column contains the entries that make the values of each row sum to 1.  
In the decomposition we split off matrices that are permutation matrices on the first $k$ columns and have $n-k$ non-zero entries on the last column; see line~\ref{alg1:line2} in Algorithm~\ref{alg:gen_bvn}.

We can use this realization to extend the implementation of the \acs{BvN} algorithm \cite{birkhoff-1940-lattice}, which translates the marginal rank probability matrix into a bipartite graph and uses the Hopcroft-Karp algorithm~\citep{hopcroft-1973-algorithm} to find a perfect matching $m$, which in turn can be translated back into a permutation matrix, $P^m$; see 
line~\ref{alg1:line4}, \ref{alg1:line5} and~\ref{alg1:line6}.\footnote{For the implementation we used  \url{https://networkx.org} and \url{https://github.com/jfinkels/birkhoff}\label{footnote:Hopcroft}}

In the next step, 
line~\ref{alg1:line7},  we calculate the biggest coefficient $\alpha$,  such that subtracting the scaled permutation matrix $\alpha P^m$,  still results in a matrix with only non-negative coefficients.  We add the coefficient-matrix pair to the decomposition and subtract the scaled permutation matrix from $\Ptil$; see
line~\ref{alg1:line8} and~\ref{alg1:line9}.
\begin{algorithm}[t]
\caption{Algorithm for the generalized Birkhoff-von Neumann decomposition.}
\label{alg:gen_bvn}
\begin{algorithmic}[1]
\Require $P\in \text{Mat}_{n\times k}$ with properties as in Theorem \ref{theorem:generalized_bvn} 
\State Initialize $\mathcal{P}=\{\}$ empty decomposition
\State Extend $P$ to $\Ptil$ by adding a column $\{c_i\}_{i=1,\dots, n}$ with values $c_i = 1-\sum_{j=1}^k P_{i,j}$ \label{alg1:line2}
\While{$\Ptil \ne 0$}
	\State Translate $\Ptil$ to a bipartite graph with $n$ resp. $k+1$ vertices on each side with edges between the $i$-th and $j$-th vertex if $P_{i,j}\ne 0$ \label{alg1:line4}
    \State  Find a perfect matching $m$ (with multiplicity of $n-k$ for the last vertex) with the adjusted Hopcroft-Karp algorithm \label{alg1:line5}
    	\State Translate $m$ to a matrix $P^m$, where $P^m\vert_{i\le n, j\le k}$ forms a permutation matrix. \label{alg1:line6}
    	\State $\alpha = \min_{\{i,j \mid P^m_{i,j}\ne 0\}}(\Ptil_{i,j})$ \label{alg1:line7}
    	\State $\mathcal{P} \gets \mathcal{P} + (\alpha, P^m\vert_{i\le n, j\le k})$ \label{alg1:line8}
    \State $\Ptil \gets \Ptil - \alpha P^m$ \label{alg1:line9}
\EndWhile
\State Return $\mathcal{P}$
\end{algorithmic}
\end{algorithm}
By translating the matrix $\Ptil$ into a bipartite graph, where the node corresponding to the $(k+1)$-th column has multiplicity $n-k$,  and adjusting the Hopcroft-Karp algorithm (line~\ref{alg1:line5}) slightly to allow for certain vertices to be matched with higher multiplicity,  we can significantly speed up this part of the algorithm from $n^2\sqrt n $ to $k^2n$.  Since the upper bound of matrices in the decomposition decreases from order $n^2$ to $k n$ the complexity changes as stated in the following Theorem. 
A proof of this statement can be found in the Appendix \ref{sec:proofs_efficient}

\begin{theorem}
Using the modified top-$k$ algorithm for the generalized Birkhoff-von Neumann theorem, Algorithm~\ref{alg:gen_bvn}, a decomposition as described in Theorem~\ref{theorem:generalized_bvn} can be obtained with time complexity $\mathcal{O}(k^3 n^2)$.
\end{theorem}

\subsection{Determining a stochastic policy that avoids rankings with unknown exposure distribution}\label{sec:method_felix}

As explained in Section~\ref{sec:background_outliers}, certain types of rankings can have a non-typical exposure distribution.
Allowing such rankings invalidates the approach by \citet{singh2018fairness}, since a position-based exposure vector $\mathbf{v}$ is used in both the utility calculation and the fairness constraint in their approach.  
In this section our goal is to find a stochastic policy that avoids rankings for which the exposure distribution is unknown.  
We will use a re-sampling strategy, which, after the decomposition step in Eq.~\ref{eq:original_bvn}, rejects rankings with unknown exposure distribution.  
The core idea we present below is based on the assumption that the inter-item dependencies between some of the items is the cause of the shift in exposure and that by shuffling the items between different rankings,  rankings with unknown exposure distribution might be changed into rankings with known exposure distribution.

Algorithm~\ref{alg:FELIX} gives a step-by-step overview of the algorithm used by \method{}. 
Similarly to \citet{wang2020fairness}, we make use of the fact that the Birkhoff-von Neumann decomposition is not unique.  
For most doubly stochastic matrices there is a large number of possible decompositions~\cite{dufosse2018further}, which makes it possible for us to search for a decomposition that does not have a lot of weight on rankings with unknown exposure distribution. 
After determining the MRP matrix $\mathbf{P}$ (line \ref{alg2:line1}), we decompose it into the sum $\mathbf{P} = \sum_{i=1}^M \alpha_i P_{\sigma_i}$. 
In the top-$k$ setting this can be done by using the generalized Birkhoff-von Neumann algorithm (Algorithm~\ref{alg:gen_bvn}); see  Algorithm~\ref{alg:FELIX} line~\ref{alg2:line4}.
We write $\mathcal{P}=\{(\alpha_i, P_{\sigma_i})\}_{i=1,\dots, M}$ for the set of coefficient, matrix pairs in this convex sum. 
Once the matrix is fully decomposed, we divide the resulting coefficient,  permutation matrix pairs $(\alpha_i, P_{\sigma_i})$ into two groups, one containing all the permutations where the corresponding ranking has a known exposure distribution amongst its items and the other one containing pairs corresponding to rankings with unknown exposure distribution:
\begin{align*}
\mathcal{P}_\mathit{known} &= \{(\alpha_i,P_{\sigma_i})\in \mathcal{P} | \sigma_{i}\text{ has known exposure distribution}\} \\ 
\mathcal{P}_\mathit{unknown} &= \mathcal{P} -  \mathcal{P}_\mathit{known}.
\end{align*}
\noindent We  use the elements of $\mathcal{P}_\mathit{known}$ directly as a part of the final decomposition; see lines~\ref{alg2:line5}--\ref{alg2:line7}.
The elements of $\mathcal{P}_\mathit{unknown}$ are aggregated, weighted by their coefficient; see line~\ref{alg2:line8}.
\begin{equation}
\mathbf{\Ptil} = \sum_{(\alpha_i, P_i)\in \mathcal{P}_\mathit{unknown}} \alpha_i \cdot P_i . 
\end{equation}%
Up to scalar multiplication, the resulting matrix ${\mathbf{\Ptil}}$ satisfies the required characteristics of Theorem~\ref{theorem:generalized_bvn} and hence can be decomposed again with the generalized  \acs{BvN} decomposition (Algorithm~\ref{alg:gen_bvn}).

This decomposition-aggregation process repeats for a number of iterations, $\mathit{iter}$ (line~\ref{alg2:line3}--\ref{alg2:line10}).
In each iteration,  the recombination of rankings with unknown exposure distribution makes it possible for the algorithm to group items together that previously have not been together in one ranking.  
Through this re-sampling, the context in which items are presented changes,  which often also means that the exposure distribution of these newly ranked list is known.  
Note that this approach does not remove items from the rankings, but rather shuffles the items among different rankings within the decomposition. 
After $\mathit{iter}$ iterations the remaining rankings with unknown exposure distribution are being added to the policy (line~\ref{alg2:line11}--\ref{alg2:line13})  to ensure the fairness and utility, that was optimized for. 

\begin{algorithm}[t]
\caption{Fairness of Exposure in Light of Incomplete Exposure Estimation (\method)}
\label{alg:FELIX}
\begin{algorithmic}[1]
\Require $\mathcal{D}_q$, $k$,  merit vector $\mathbf{u}$,  position bias vector $\mathbf{v}$,  number of iterations $\mathit{iter}$ 
\State Determine \ac{MRP} matrix $\mathbf{P}$ as in Eq.~\ref{eq:convex_optimization_topk} with $\mathbf{u}$ and $\mathbf{v}$  \label{alg2:line1}
\State Initialize $\pi(\sigma) = 0$, $\forall \sigma \in \mathcal{R}_\mathcal{D}$
\While{$\mathit{iter} \neq 0$} \label{alg2:line3}
    \State  $\mathcal{P} \gets \text{ Decompose } \mathbf{P}$ with Algorithm~\ref{alg:gen_bvn} \label{alg2:line4}
\ForAll{$(\alpha, P_\sigma) \in \mathcal{P}_\mathit{known}$} \label{alg2:line5}
    	\State  $\pi(\sigma) \gets \pi(\sigma) + \alpha$ \label{alg2:line6}
\EndFor \label{alg2:line7}
    \State  $\mathbf{P} \gets \sum_{(\alpha, P_\sigma)\in \mathcal{P}_\mathit{unknown}}  \alpha \cdot P_\sigma $  \label{alg2:line8}
    \State $\mathit{iter} \gets \mathit{iter} - 1$ \label{alg2:line9}
\EndWhile \label{alg2:line10} 
\ForAll{$(\alpha, P_\sigma) \in \mathcal{P}_\mathit{unknown}$} \label{alg2:line11} 
    	\State  $\pi(\sigma) \gets \pi(\sigma) + \alpha$\EndFor \label{alg2:line13} 
\State Return $\pi$
\end{algorithmic}
\end{algorithm}

\subsection{Upshot}
To summarize Section~\ref{sec:method},  we extended the continuous optimization approach to fairness for the top-$k$ setting  in Section~\ref{sec:method_topk}  by proving that the Birkhoff-von Neumann theorem, which is used to decompose the matrix that was attained through the convex optimization, can be extended to a more general setting.  In Section~\ref{sec:efficient_bvn_implementation} we gave an algorithm for the decomposition in the top-$k$ case and discussed an efficient implementation.  This extends the space of use cases to which this approach to fair ranking can be applied.  We will use this in our experiments, which will partly be conducted in the top-$k$ setting.  
In Section~\ref{sec:method_felix} \method{} is introduced, which, by iteratively rejecting rankings with unknown exposure distribution, reduces the probability that such rankings are shown to the user.  

Next, we test the performance of the proposed method for top-$k$ fairness. Furthermore, we investigate how well \method{} is able to avoid rankings with unknown exposure distribution and how this impacts the performance w.r.t.\ fairness and user utility.  

 
\section{Experimental Set-up}
\label{sec:experiments}
We experiment with two variants of our model:  to evaluate our top-$k$ approach to fair ranking we use \method  without re-sampling i.e., with only one iteration, denoted by \textbf{\methodn{1}}; to evaluate our method for reducing the probability of generating rankings with unknown exposure we use $20$ iterations (\textbf{\methodn{20}}).

Our experiments aim to answer the following research questions: 
\begin{enumerate*}[leftmargin=*,label=(RQ\arabic*)]
    \item Can \textbf{\methodn{1}} provide fair top-$k$ rankings while maintaining the user utility compared to the baselines? \label{RQ:fair_top_k_1}
    \item  Can \textbf{\methodn{20}} reduce the probability of showing rankings with unknown exposure distribution to the user without compromising fairness or utility, compared to other methods? \label{RQ:avoid_unknown_exp}
     \label{RQ:no_loss_utility_fairness}
\end{enumerate*}
We use the case of rankings with outliers as an example for rankings with unknown exposure distribution. 
As \citet{sarvi-2021-understanding-arxiv} show, outliers can change the exposure distribution that items collect in expectation; we broadly follow their experimental set-up to be able to compare to prior work that is, for this specific use case, closest to our approach. 

\header{Datasets} 
Our experiments in Section~\ref{sec:results} use two academic search datasets provided by the TREC19 and TREC20  Fair Ranking track.\footnote{\url{https://fair-trec.github.io/}\label{footnote:FRT}}
These datasets come with queries, relevance judgements, and information about the authors and academic articles extracted from the Semantic Scholar Open Corpus.\footnote{\url{http://api.semanticscholar.org/corpus/}} 
See Table~\ref{table:data-statistics} for descriptive statistics of the datasets.
Since we experiment on the task of removing outliers from the top-$k$, which only makes sense for queries with enough items, for testing we only use rankings with at least 20 items. 
The 2020 dataset comes with 200 queries for training and 200 for testing; keeping only the lists with at least 20 papers leaves us with 112 test queries. 
Similarly, the 2019 dataset comes with 631 queries for training and 631 for testing. However the test set contains only 3 queries with more than 20 items, which is not acceptable.  As a pragmatic solution, we keep lists with at least 10 items, which leaves us with 69 test queries, but up-sample each of these queries to 50 items by using the feature vectors of non-relevant items from other random lists as negative samples.

\begin{table}[t]
   \caption{Descriptive statistics of the original and pre-processed TREC Fair Ranking track 2019 and 2020 data. }
   \label{table:data-statistics}
   \centering
   \begin{tabular}{l rr rr}
     \toprule
        &  \multicolumn{2}{c}{2019} &  \multicolumn{2}{c}{2020} \\
       \cmidrule(r){2-3}\cmidrule{4-5}
        &  Train  &  Test &  Train  &  Test \\
      \midrule
      Avg. list size (original) & 4.1 & 4.1 & 23.5& 23.4\\
      Avg. list size (pre-proc.) & 4.1  & 13.0 & 23.5 & 31.9\\
      Avg. \# rel. items/list (original) &2.0 &2.0 &3.7 & 3.4 \\
      Avg. \# rel. items/list (pre-proc.) &2.0 &4.4 & 3.7 & 4.5 \\
      \bottomrule
   \end{tabular}
\end{table}

\header{Experiments} 
We consider approaches where correcting for fairness is a post-processing step. 
We use ListNet~\cite{cao2007learning} as our \ac{LTR} model for the ranking step, with a maximum of 30 epochs, the Adam optimizer with learning rate of 0.02, and early stopping. 
As input to the \ac{LTR} model we use the same data as OMIT\footnote{\url{https://github.com/arezooSarvi/OMIT_Fair_ranking}} with 25 features based on term frequencies, BM25~\cite{robertson2009probabilistic}, and language models~\citep{zhai2001study, tao2006language}.\footnote{Our experimental code is based on  \url{https://github.com/MilkaLichtblau/BA_Laura}.\label{footnote:BA_laura}}

To be able to treat the output of the \ac{LTR} model as the relevance probabilities we normalize the predicted scores to be within the range $[\epsilon, 1]$ with $\epsilon=10^{-4}$. Choosing $\epsilon > 0$ ensures that each item has a non-zero probability of being placed in a ranking. 

As mentioned earlier in this section, we use rankings that contain visible outliers as example for rankings with unknown exposure distribution. Following  \cite{sarvi-2021-understanding-arxiv} we use the number of citations of a paper as a visible feature that may be subject to outliers.
For the context in which outliers are perceived we use the top-$k$ items.  
We use the Z-score with threshold value $2.5$ to determine whether an item can be considered an outlier; see Section~\ref{sec:background_outliers}.

We conduct two types of experiments.  
The first experiment imitates the experimental set-up of \citet{sarvi-2021-understanding-arxiv}, where full rankings are formed but the presence of outliers is only measured in the top-$k$ of each ranking.  The second experiment looks at top-$k$ ranking.  We use $k=10$ in our experiments and aim for individual fairness as opposed to \citep{sarvi-2021-understanding-arxiv,singh2018fairness}, where group fairness is used. 

\header{Baselines} 
To answer research questions \ref{RQ:fair_top_k_1} and \ref{RQ:avoid_unknown_exp},  we compare \textbf{\methodn{1}} and \textbf{\methodn{20}} with the following baselines: 
\begin{description}[leftmargin=\parindent,nosep]
\item[\textbf{\acs{PL}}] As suggested in \citep{diaz2020evaluating}, we use a \ac{PL} ranker initialized with the predicted, normalized scores of the \ac{LTR} model.  
\item[\textbf{\acs{PL}-random}] We use a \ac{PL} ranker over a uniform score distribution as a baseline for a random ranker.
\item[\textbf{Vanilla}] We use the method introduced by \citet{singh2018fairness} with only fairness constraints as the vanilla baseline. This is the model we build upon.
\item[\textbf{Deterministic}] This baseline is ListNet, our traditional \ac{LTR} model.
\item[\textbf{OMIT}] The method introduced in~\citep{sarvi-2021-understanding-arxiv}, where a similar optimization problem is solved as for Vanilla, but with an additional regularizing objective that punishes rankings with a global outlier in the top-$k$. 
\end{description}
For the experiments on the top-$k$, we only sample $k=10$ items from the \acs{PL} models, \textbf{\acs{PL}@10} and \textbf{\acs{PL}-random@10}.  Since \textbf{\methodn{1}} is a novel extension of the Vanilla convex optimization approach for the top-$k$ setting,  we do not have the Vanilla baseline in this setting.  For OMIT we use our top-$k$ convex optimization approach with the additional outlier objective, \textbf{OMIT@10},  to be able to compare the outlier reduction of \methodn{20} and OMIT in the top-$k$ setting.   

\header{Evaluation}
To evaluate fairness we use the EE-L metric \cite{diaz2020evaluating}.  
The target exposure of item $d_i$ is calculated as  $\epsilon^*(d_i) = \epsilon_{total} \cdot u(d_i)/\sum_{j}u(d_j)$, where $\epsilon_{total}$ is the total amount of exposure that users spend in expectation on the ranking, and $u(d_i)$ is the merit, i.e. relevance,  of item $d_i$. Given the expected exposure of all items as a vector $\epsilon$, the \emph{expected exposure loss}, \emph{EE-L} can be calculated as:
\begin{align}
\text{EE-L} = \ell\left(\epsilon, \epsilon^{*}\right) &=\left\|\epsilon-\epsilon^{*}\right\|_{2}^{2}.
\end{align}
Ranking utility performance is measured with NDCG.

For a given query, to evaluate how well a policy $\pi$ performs in avoiding rankings with unknown exposure distribution, we measure the probability that such a ranking is displayed by the policy. 
In our experiments this translates to measuring the probability that a randomly sampled ranking, $\sigma$ contains an outlier: 
\begin{align*}
	P(u \mid \pi) 	
	&=   P(  \sigma \text{ has unknown exposure distribution} \mid  \sigma \sim \pi) \\
	&=_{\text{here}} P( \# \text{ outliers in } \sigma \ge 1 \mid  \sigma \sim \pi).
\end{align*}
Additionally, for comparability with \cite{sarvi-2021-understanding-arxiv}, we measure: 
\begin{align*}
	\operatorname{Outlierness}@k(\pi) &= \mathbb{E}_{\sigma \sim \pi} \sum_{d_i\in \text{top-$k$}(\sigma)} \mathds{1}(\text{$d_i$ is outlier})z(d_i).
\end{align*}
For each metric we report the average value taken over all queries.  
Each experiment was conducted 5 times with different train/vali\-dation split and different random seed. 
Each split uses 80\% of the train-data for training and 20\% of the train-data for validation.  
In our result tables we report the mean results.  
We test for significance with a two tailed paired students t-test,  using the metric values over all queries as input and comparing each method with \methodn{20}.  


\section{Results}
\label{sec:results}
   
Table~\ref{table:top_k_results} and~\ref{table:full_ranking_results} contain the results for our experiments on the top-$k$ and full ranking set-up, respectively.

\begin{table}[!t]
    \caption{Top-$k$ rankings. Significance is measured with a two-tailed paired t-test; all comparisons are against \methodn{20}.
}
    \setlength{\tabcolsep}{2.4pt}
    \centering
    \resizebox{1.0\columnwidth}{!}{%
      \begin{tabular}{@{} l l @{} c @{} cc c cc@{}}
        \toprule
    	 &
	 &
    	\multicolumn{1}{c}{Optimizing} &
        \multicolumn{2}{c}{NDCG{$\uparrow$}} &
        \multicolumn{1}{c}{Fairness{$\downarrow$}} &
        \multicolumn{1}{c}{$P(u\mid \pi)${$\downarrow$}} &
        \multicolumn{1}{c}{Outlierness{$\downarrow$}}
          \\
        \cmidrule(r){4-5}
        \cmidrule(r){6-6}
        \cmidrule(r){7-7}
        \cmidrule{8-8}     
        &
        \multicolumn{1}{c}{Method} &
        \multicolumn{1}{c}{Fairness} &
        \multicolumn{1}{c}{@5} &
        \multicolumn{1}{c}{@10} &
        \multicolumn{1}{c}{EE-L} &
        \multicolumn{1}{c}{@10} &
        \multicolumn{1}{c}{@10} \\
        \midrule
        \multirow{6}{*}{\rotatebox[origin=c]{90}{\textbf{TREC20}}}  
                   & \textbf{\methodn{20}}   & Yes  &  0.203 & 0.279  & \textbf{6.22}  &  \textbf{0.20}  & \textbf{0.115} \\ 
  		  & \textbf{\methodn{1}}   &  Yes  & 0.203 & 0.279 &  6.23 & 0.39\rlap{*} & 0.151\rlap{*}   \\
             \cmidrule(l{1pt}){2-8}
             &   PL@10  &  Yes  & 0.197 & 0.275 & 6.24 & 0.47\rlap{*} & 0.174\rlap{*}  \\
             &   PL-random@10  &  No  & 0.177\rlap{*} & 0.249\rlap{*} & 6.29 & 0.47\rlap{*}  & 0.175\rlap{*} \\
             &   Deterministic  &  No  & \textbf{0.287}\rlap{*} & \textbf{0.370}\rlap{*}  & 7.22\rlap{*} & 0.41\rlap{*} & 0.154\rlap{*} \\
             &   OMIT@10  &  Yes  & 0.198 & 0.273  & 6.34 & 0.33\rlap{*} & 0.132\rlap{*} \\
          \midrule 
         \multirow{6}{*}{\rotatebox[origin=c]{90}{\textbf{TREC19}}}  
             & \textbf{\methodn{20}}   &  Yes  & 0.12 & 0.16  & 5.9 &  \textbf{0.12}  & \textbf{0.08} \\
             & \textbf{\methodn{1}}   &   Yes  &0.12 & 0.16 &  5.9& 0.30\rlap{*}   & 0.12\rlap{*}   \\      
             \cmidrule(l{1pt}){2-8}
             & PL@10  &  Yes  & 0.11 & 0.16 & \textbf{5.8} & 0.35\rlap{*}   & 0.14\rlap{*}    \\
             & PL-random@10  &  No  & 0.10 & 0.15 &  \textbf{5.8} & 0.41\rlap{*}   &  0.16\rlap{*}   \\
             & Deterministic  &  No  & \textbf{0.15} & \textbf{0.21}\rlap{*}   & 7.5\rlap{*}  &  0.25\rlap{*}   &  0.12\rlap{*}  \\
             & OMIT@10  &  Yes  & 0.11 & 0.15  & 6.0 & 0.23\rlap{*}   & 0.10  \\ 
        \bottomrule
      \end{tabular}%
      }\label{table:top_k_results}
    \end{table}

    \begin{table}[t]
    \caption{Full length rankings, remove outliers from the top-$k$. Significance is reported in the same way as in Table~\ref{table:top_k_results}. }
    \setlength{\tabcolsep}{2.4pt}
    \centering
    \resizebox{1.0\columnwidth}{!}{%
      \begin{tabular}{@{} l l @{} c @{} cc c c c @{}}
        \toprule
        &
    	&
    	\multicolumn{1}{c}{Optimizing} &
        \multicolumn{2}{c}{NDCG{$\uparrow$}} &
        \multicolumn{1}{c}{Fairness{$\downarrow$}} &
        \multicolumn{1}{c}{$P(u\mid \pi)${$\downarrow$}} &
        \multicolumn{1}{c}{Outlierness{$\downarrow$}}
          \\
        \cmidrule(r){4-5}
        \cmidrule(r){6-6}
        \cmidrule(r){7-7}
        \cmidrule{8-8}
        &
        \multicolumn{1}{c}{Method} &
        \multicolumn{1}{c}{Fairness} &
        \multicolumn{1}{c}{@5} &
        \multicolumn{1}{c}{@10} &
        \multicolumn{1}{c}{EE-L} &
        \multicolumn{1}{c}{@10} &
        \multicolumn{1}{c}{@10} \\
          \midrule 
            \multirow{6}{*}{\rotatebox[origin=c]{90}{\textbf{TREC20}}}  
             & \textbf{\methodn{20}}   &  \text{Yes} & 0.221 & 0.302 &  \textbf{24.5} &  \textbf{0.24} &  \textbf{0.126}  \\
             \cmidrule(l{1pt}){2-8}
             & Vanilla  &  \text{Yes} &  0.221 & 0.302 &  \textbf{24.5} & 0.40\rlap{*}  & 0.163\rlap{*}  \\
             & PL  &  \text{Yes} & 0.192\rlap{*} & 0.269\rlap{*} &  24.7 & 0.45\rlap{*} &  0.169\rlap{*}  \\
             & PL-random  & \text{No} & 0.178\rlap{*} & 0.249\rlap{*} & 24.9 & 0.47\rlap{*} &  0.175\rlap{*}  \\
             & Deterministic  & \text{No} & \textbf{0.267} & \textbf{0.348} & 24.7 & 0.40\rlap{*} & 0.152 \\
             & OMIT  & \text{Yes} & 0.221 & 0.302 & \textbf{24.5} & 0.34\rlap{*} & 0.139 \\                   
        \midrule
         \multirow{6}{*}{\rotatebox[origin=c]{90}{\textbf{TREC19}}}  
             & \textbf{\methodn{20}} &   \text{Yes}  & 0.15 & 0.22 &  \textbf{46.4} &  \textbf{0.11} & \textbf{0.06} \\
             \cmidrule(l{1pt}){2-8}
             &  Vanilla &  \text{Yes} &  0.16 & 0.22 & \textbf{46.4} & 0.14 & 0.07 \\
             & PL &  \text{Yes}  & 0.12  & 0.17 &  \textbf{46.4} & 0.32\rlap{*}& 0.13\rlap{*} \\
             & PL-random &  \text{No}  & 0.10\rlap{*} & 0.15\rlap{*} & 46.5 & 0.41\rlap{*} & 0.16\rlap{*} \\
             & Deterministic &  \text{No} & \textbf{0.17} & \textbf{0.23} & 46.6 & 0.12 & 0.07  \\
             & OMIT  &  \text{Yes} & 0.13 & 0.18 & 46.5 & 0.15 & \textbf{0.06} \\                        
        \bottomrule
        
      \end{tabular}%
      }\label{table:full_ranking_results}
    \end{table}

\headernodot{\ref{RQ:fair_top_k_1}: 
Can \methodn{1} provide fair top-$k$ rankings while maintaining the user utility compared to the baselines?}
To answer this research question we first compare the performance of \methodn{1} with PL@10, since this is the only baseline that has as its objective to create fair top-$k$ ranking policies. 
For both utility and fairness \methodn{1} performs marginally better on TREC20 data.  In the case of TREC19 data, \methodn{1} still has slightly better user utility; the fairness scores are close to identical.  Overall none of these differences are significant. 

As a sanity check, looking at our other baselines, we see that w.r.t.\ user utility (NDCG), in Table~\ref{table:top_k_results} the deterministic ranker outperforms all probabilistic rankers, which is expected since it is purely optimized for utility. This is reflected in the fairness score, where the deterministic ranker scores significantly worse than \methodn{20}.  W.r.t.\ utility, the random ranker is outperformed by all other probabilistic ranking methods, showing that these methods present users with better results than a uniform ranking policy would.  

To summarize, we find no significant differences in terms of utility or fairness between \methodn{1} on the one hand and the PL-ranker on the one hand.  This makes our approach suitable for top-$k$ ranking under fairness constraints and hence allows us to extend \method{} for this setting.  In the rest of this section, we will see other advantages of \method{} over the PL baseline. 

\headernodot{\ref{RQ:avoid_unknown_exp}: Can \methodn{20} reduce the probability of showing rankings with unknown exposure distribution to the user, without having to compromise fairness or utility, compared to other methods?}
We are interested in the trade-offs between user utility, fairness and the probability of showing rankings with unknown exposure, which is indicated by $P(u\mid \pi)$, in Tables~\ref{table:top_k_results} and~\ref{table:full_ranking_results}.  For the TREC20 data, in both settings \methodn{20} successfully improves $P(u\mid \pi)$ while maintaining the NDCG@10 and EE-L scores compared to all baselines.  
Our main baseline to compare with for this research question is OMIT, as it is the only model that optimizes for presenting fewer outliers in the top-$k$ positions. 
Compared to OMIT, \methodn{20} achieves  significantly better results in terms of $P(u\mid \pi)$ for both settings, while keeping the same (or better) scores for other metrics.  For the top-$k$ experiment, we also see a significant improvement w.r.t.\ $P(u\mid \pi)$, compared to \methodn{1}: iteratively re-sampling successfully reduces the number of rankings with unknown exposure distribution in the policy. 
For the TREC19 data we can still observe that \methodn{20} offers the best trade-off between the three objectives in the top-$k$ setting.  However, the improvements w.r.t.\ the outlier removal are less significant in the full length experiments.  Since for this dataset we used an up-sampling strategy that adds varying negative samples, the variation within these experiments is much higher, which makes the results less reliable and causes the observed differences to be less significant.  Still, since the results broadly agree with the results for the more reliable TREC20 dataset, we take this as confirmation for the conclusions drawn there. 

We also report the Outlierness metric, as introduced in~\citep{sarvi-2021-understanding-arxiv}, to show that the improvement of \methodn{20} is not just due to the evaluation metric introduced in this paper but that there is an actual improvement w.r.t.\ the outlier use case.

We conclude that in our experiments,  \methodn{20} is able to effectively reduce the probability that a ranking with unknown exposure distribution is shown to the user,  without a drop in utility or fairness,  compared to other fair ranking methods and OMIT.  

\header{Discussion}
If we compare our results to those in~\citep{sarvi-2021-understanding-arxiv}, OMIT does not perform as well as expected w.r.t.\ $P(u\mid \pi)$ and Outlierness. 
We see two reasons for this.  
First,  OMIT considers outliers in the context of the whole list, while we consider outliers in the context of the top-$k$ that they are presented in; their approach is able to remove outliers defined in the global context from the rankings but does not consider the outliers in the local context they are presented in, which is what we are evaluating for. 

Second,  in this paper we consider individual fairness, while~\citet{sarvi-2021-understanding-arxiv} report results on group fairness.
For individual fairness the number of constraints is much higher, therefore the space we are optimizing over is  smaller, making it challenging for OMIT to find a good solution that is optimized for both utility and reducing outliers while satisfying all the fairness constraints. 
\methodn{20} does not suffer from this, since, instead of adding an additional objective term to the optimization, it intervenes at the decomposition step, making it independent from the constraints used in the optimization.  

This comparison shows that \method{} is very general in terms of use cases that it can be applied to.  
The condition that determines whether a ranking has a known exposure distribution can be focused on each individual ranking without having to rely on global assumptions. 
This allows us to really consider inter-item dependencies, while OMIT needs to work with the heuristic of global outliers instead.
This also highlights the advantages of \method{} over the PL-ranker method. While for most experiments there was no significant difference in utility and fairness between those two methods,  considering inter-item dependencies within the rankings is not possible for the PL approach to fair ranking. 


\section{Sensitivity Analysis of \protect\method{}} 
\label{sec:analysing}
 \label{sec:analaysis_results}
 
Given the results obtained in the previous section, we now analyze the ability of \method{} to reduce the number of rankings with unknown exposure distribution along two important dimensions:
\begin{enumerate*}[leftmargin=*,label=(D\arabic*)]
\item the number of available item candidates; and \label{D:num_candidates}
\item the number of re-sampling iterations, $iter$ (see line \ref{alg2:line3} in Algorithm \ref{alg:FELIX}). \label{D:resampling_iterations}
\end{enumerate*} 

For the TREC datasets most queries have less than 40 items, hence, we use a simulated set-up. 
This gives us more control, allowing us to observe \method{}'s behaviour for different distributions and numbers of candidate items. 
Each analysis is conducted with a series of $m=100$ simulated sets of $n$ items (one can think of these item-sets as corresponding to $m$ imaginary queries). 
Since we want to focus on the effectiveness of \method, rather than the quality of the predicted labels,  we assume that for each item we know the correct probability that an item is relevant to users.  
For our analysis we sample these scores uniformly in the interval $[0,1]$.
The feature that is used for the outlier detection is sampled from a different probability distribution. We conduct experiments on the uniform, normal, log-normal, and power-law distribution to see how dependent the results are on the underlying data distribution.
Each of the different distributions has a different base probability for a list of a given length to contain an outlier, and hence can be seen as different levels of difficulty for removing the rankings with unknown exposure distribution.  With the definition of outliers used in this paper and a list length of 10, the probability that such a list contains an outlier is 0.6\% for the uniform,  2.7\% for the normal, 36.3\% for the log normal and 60.5\% for the power-law distribution.


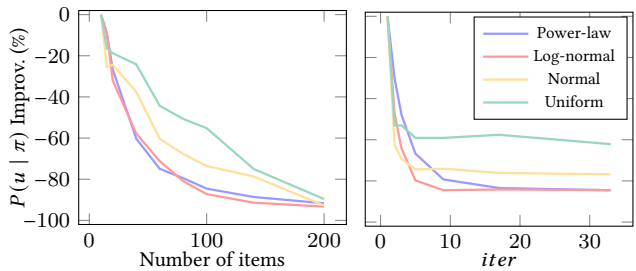
\begin{figure}
\centering

\newcommand{\SpacingX}{0.1em}
\newcommand{\Width}{0.20\textwidth}
\newcommand{\Height}{0.16\textwidth}
\newcommand{\BarWidth}{0.001\textwidth}
\newcommand{\BarOffset}{0.00\textwidth}
	\begin{tikzpicture}[font=\small]
    		\begin{axis}[
            	xlabel = Number of items ,
            	width=\Width,
            	height=\Height,
            	y label style={yshift=-1.4em},
            	x label style={yshift=0.9em},
            	bar width=\BarWidth,
			scale only axis,
			ylabel={$P(u\mid \pi)$ Improv.  (\%) },
			ymin=-102,
			ymax=2, 	
        	]       
        	\addplot[line width=0.3mm,color=blue!40!white] table[x=num_candidates, y=powerlaw,col sep=comma]{figures/candidates.csv}; 
        	\addplot[line width=0.3mm,color=red!40!white] table[x=num_candidates, y=log_normal,col sep=comma]{figures/candidates.csv}; 
        	\addplot[line width=0.3mm,color=orange!60!yellow!40!white] table[x=num_candidates, y=gaussian,col sep=comma]{figures/candidates.csv};
        	\addplot[line width=0.3mm,color=green!60!blue!40!white] table[x=num_candidates, y=uniform,col sep=comma]{figures/candidates.csv}; 
    		\end{axis}
	\end{tikzpicture}%
	\begin{tikzpicture}[font=\small]
    		\begin{axis}[
            	xlabel = $iter$,
            	width=\Width,
            	height=\Height,
            	x label style={yshift=0.9em},
            	bar width=\BarWidth,
			scale only axis,
			ymin=-102,
			ymax=2, 
			yticklabels=none, 
			legend style={nodes={scale=0.8, transform shape}},
        	]       
        	\addplot[line width=0.3mm,color=blue!40!white] table[x=number_of_resamples, y=powerlaw,col sep=comma]{figures/resample_experiment.csv}; 
        	\label{pgfplots:plot1}
        	\addlegendentry{Power-law}
        	\addplot[line width=0.3mm,color=red!40!white] table[x=number_of_resamples, y=log_normal,col sep=comma]{figures/resample_experiment.csv}; 
        	\label{pgfplots:plot2}    		
        	\addlegendentry{Log-normal}
        	\addplot[line width=0.3mm,color=orange!60!yellow!40!white] table[x=number_of_resamples, y=gaussian,col sep=comma]{figures/resample_experiment.csv};
    		\addlegendentry{Normal}
        	\label{pgfplots:plot3}
        	\addplot[line width=0.3mm,color=green!60!blue!40!white] table[x=number_of_resamples, y=uniform,col sep=comma]{figures/resample_experiment.csv}; 
        	\label{pgfplots:plot4}
    		\addlegendentry{Uniform}
    		\end{axis}
    		legend style={at={(0.9,0.5)},anchor=west}
	\end{tikzpicture}%
        \caption{Sensitivity analysis. Relative reduction in~$P(u\mid \pi)$ in $\%$ on the y-axis for different numbers of available candidate items (left) and different numbers of iterations (right). }
        \label{fig:analysis}
\end{figure}

\header{\ref{D:num_candidates} Candidate items}
The left plot in Fig.~\ref{fig:analysis} shows the relative reduction of rankings with outliers with a varying number of candidate items.  We use 20 re-sampling iterations. 
We see that for all distributions,  \method performs increasingly better as the number of items increases.
Having more items to shuffle between various rankings gives the method more flexibility in putting outlier items into different contexts, in which they do not appear as outliers.

\header{\ref{D:resampling_iterations} Re-sampling parameter}
The right plot of Fig.~\ref{fig:analysis} shows how well \method is able to remove outliers from the rankings based on the number of re-sampling iterations,  which is the only new hyper-parameter introduced by our method.  We use 100 candidate items per query. 
We find that with an increasing number of re-samples, \method can remove more outliers. Nevertheless, the gains seem to be diminishing, depending on the distribution after 5--20 iterations.  

\header{Broader implications}
Ranking systems often work in two stages, where in the first stage a certain number of documents are retrieved and in the second stage they are re-ranked with help of a learning to rank method. Our analysis of the number of candidate items \ref{D:num_candidates} can help deciding on how many items to retrieve in the first stage.  Moreover, the analysis of the re-sampling parameter \ref{D:resampling_iterations}  can help with deciding on a good performance/computation time trade-off when choosing the number of allowed re-sampling iterations.  


\section{Conclusion}

Motivated by recent work on the impact of outliers on the exposure distribution within a ranking, we introduced the task of fair ranking under incomplete exposure estimation.  
We defined a new method, \method{}, that avoids showing rankings to the user which, due to inter-item dependencies, have unknown exposure distribution.  
We extended the convex optimization approach to fairness to the top-$k$ setting and gave an efficient implementation of the algorithm that makes it feasible, even for a large number of items.  We  showed empirically that \method{} is able to significantly reduce the probability of generating  rankings with unknown exposure, without hurting user utility or fairness compared to previous fair ranking methods.   

\method{} is a first step towards fair ranking in cases where due to inter-item dependencies there is uncertainty about the exposure distribution of some rankings.  By defining an efficient algorithm for the top-$k$ setting, we enable the usage of the convex optimization approach towards fairness for use cases with a large number of items, which previously had been infeasible.  We discussed that this approach gives more flexibility than other methods and allows, for example, to consider the relationship between items. 

One limitation of our work is that, since the policy achieved by the convex optimization is only fair in expectation, this approach is most useful for head queries with a large number of repetitions. 
Use cases where this might be applied include job search, where next to the individual fairness criterion a correction for historical biases should be considered, or item search for items that are frequently bought.  
Second, our results are based on the assumption that the unknown exposure comes from inter-item dependencies and that the same items that cause one ranking to have unknown exposure distribution,  when placed in another context will result in a ranking with known exposure distribution.  This assumption holds for rankings with visible outliers, however, to prove the generalizability of this approach,  experiments with other use cases are needed. 
Lastly, 
to have enough flexibility within the Birkhoff-von Neumann decomposition algorithm, enough entries of this matrix need to be non-zero.  Using group fairness with only two groups, results in a marginal rank probability matrix that is a linear combination of just two permutation matrices~\citep{singh2018fairness}.  More groups introduce more stochasticity, therefore this method is particularly interesting when working with individual fairness or a larger number of groups. 
 
A potential direction for future work is to investigate whether \method{} can be extended for different user models. In this work we assume that most rankings follow a position-based exposure distribution. For other user-models like the cascade model a different approach might be necessary.  Also, more research needs to be done on inter-item dependencies between items in a ranking and their impact on the exposure for different use cases. Phenomena like outliers or click bait have been explored to some extent but other types of cognitive bias that impact how we perceive items in relation to others have been broadly unexplored in the context of ranking systems. Lastly, extending user models to include inter-item dependencies such as outliers might allow for a more direct approach to fair ranking in cases where the exposure distribution is unknown.

\section*{Data and Code}
To facilitate reproducibility of our work, all code and parameters are shared at \url{https://github.com/MariaHeuss/2022-SIGIR-FOE-Incomplete-Exposure}.

\section*{Acknowledgements}
We thank our reviewers for valuable feedback.
This research was supported by the Hybrid Intelligence Center,  a 10-year program funded by the Dutch Ministry of Education, Culture and Science through the Netherlands Organisation for Scientific Research, \url{https://hybrid-intelligence-centre.nl}, and by Ahold Delhaize. 
All content represents the opinion of the authors, which is not necessarily shared or endorsed by their respective employers and/or sponsors.

\appendix

\section{Proofs}

\subsection{Extended proof for the generalized \acl{BvN}} \label{sec:proofs_gen_bvn}

We give a more detailed proof of  Lemma \ref{theorem:extension_doubly_stochastic} and Theorem \ref{theorem:generalized_bvn}. Recall that we say that 
$P'\in \mathbb{R}^{n'\times k'}$ is an \emph{extension} of $P\in \mathbb{R}^{n\times k}$ if $n'\ge n, k' \ge k$, and $P_{i,j}=P'_{i,j}$ for all $(i,j)$ with $i\le n$ and $j\le k$. We denote this by $P'\vert_{i\le n, j\le k} = P$. 

\begin{lemma}
\label{theorem:long_extension_doubly_stochastic}
Let $P=\{a_{i,j}\}_{i\le n,j\le k}$ be a matrix  with the same properties as described in Theorem \ref{theorem:generalized_bvn} with $k\le n$. Then there is a matrix $P'=\{a'_{i,j}\}_{i\le n,j\le n}$ with $\forall i,j: 0 \le a'_{i,j} \le 1$ such that $P = P'\vert_{i\le  n, j\le k}$, 
and $\forall i: \sum_{j=1}^n a'_{i,j} = 1$ and $\forall j: \sum_{i=1}^n a'_{i,j} = 1$. 
\end{lemma}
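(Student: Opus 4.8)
The plan is to verify directly that the explicit padding introduced for Lemma~\ref{theorem:extension_doubly_stochastic} already yields the desired doubly stochastic extension, so the ``proof'' is really a sequence of three checks on a single candidate matrix. Assuming $k<n$ (the case $k=n$ is degenerate: the hypotheses $\sum_{i=1}^n a_{i,j}=1$ and $\sum_{j=1}^n a_{i,j}\le 1$ together force every row sum to equal $1$ as well, so $P$ is already doubly stochastic and I take $P'=P$), I would define
\[
a'_{i,j} = \left\{ \begin{array}{ll} a_{i,j} & \text{if } j \le k, \\[2pt] \dfrac{1 - \sum_{j'=1}^{k} a_{i,j'}}{n-k} & \text{if } j > k, \end{array} \right.
\]
so that the $n-k$ new entries in each row $i$ share equally the ``missing'' row mass $1-\sum_{j'=1}^k a_{i,j'}$. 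Since $a'_{i,j}=a_{i,j}$ for $j\le k$, the relation $P'\vert_{i\le n,\,j\le k}=P$ holds by construction, and it only remains to check that all entries lie in $[0,1]$, that every row sum equals $1$, and that every column sum equals $1$.

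For the entry bounds, the values in the first $k$ columns are in $[0,1]$ by hypothesis. For $j>k$ the numerator $1-\sum_{j'=1}^k a_{i,j'}$ is nonnegative precisely because of the row-sum hypothesis $\sum_{j'=1}^k a_{i,j'}\le 1$, and it is at most $1$ since each $a_{i,j'}\ge 0$; dividing by $n-k\ge 1$ keeps the result in $[0,1]$. The row sums are then immediate: summing row $i$ gives $\sum_{j=1}^k a_{i,j} + (n-k)\cdot \tfrac{1-\sum_{j'=1}^k a_{i,j'}}{n-k}$, where the factor $n-k$ cancels the denominator and the two copies of $\sum_{j'=1}^k a_{i,j'}$ cancel against each other, leaving exactly $1$.

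The column sums are the one place where the construction has to be checked carefully, and where the hypothesis that each of the first $k$ columns sums to $1$ is genuinely used. For $j\le k$ the column sum is inherited unchanged from $P$ and equals $1$. For $j>k$ I would compute $\sum_{i=1}^n a'_{i,j} = \tfrac{1}{n-k}\bigl(n - \sum_{i=1}^n\sum_{j'=1}^k a_{i,j'}\bigr)$; swapping the order of summation and substituting $\sum_{i=1}^n a_{i,j'}=1$ for each $j'\le k$ collapses the double sum to $k$, so the column sum becomes $\tfrac{n-k}{n-k}=1$. This is the only nonroutine step, and it is exactly the column-stochasticity of the first block that guarantees that precisely $n-k$ units of mass remain to be spread over the $n-k$ added columns, one unit per column. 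With all three properties verified, $P'$ is the required doubly stochastic extension; the transpose remark following the lemma then handles the symmetric case $k>n$.
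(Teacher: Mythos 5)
Your proposal is correct and follows essentially the same route as the paper: the identical equal-spreading padding $a'_{i,j}=\bigl(1-\sum_{j'=1}^{k}a_{i,j'}\bigr)/(n-k)$ for $j>k$, followed by the same three checks (entry bounds, row sums, column sums), with your column-sum verification via swapping the order of summation being just a more direct phrasing of the paper's total-mass argument. One minor improvement on your side: you explicitly handle the degenerate case $k=n$, where the paper's formula would divide by zero, by observing that the hypotheses already force $P$ to be doubly stochastic there.
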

\begin{proof}
Define $P' = \{a'_{i,j}\}_{i\le n,j\le n}$ as 
\begin{equation}
a'_{i,j} = \left\{
\begin{array}{ll}
a_{i,j} & \text{if }j \le k \\
\frac{1 - \sum_{j'=1}^{k} a_{i,j'}}{n - k} & \text{if } j > k. \\
\end{array}
\right. 
\end{equation}
Then $P'\vert_{i\le n, j\le k} = P$ by definition.  
Since for all $i$,  $0 \le \sum_{j=1}^k a_{i,j} \le 1$ we also have $0 \le \frac{1 - \sum_{j'=1}^{k} a_{i,j'}}{n - k} \le 1$. Moreover, for all $i \le n $:
\begin{align*}
    \sum_{j=1}^n a'_{i,j} & = \sum_{j=1}^k a_{i,j} + \sum_{j=k+1}^n \frac{1 - \sum_{j'=1}^{} a_{i,j'}}{n - k} \\
    &= \sum_{j=1}^k a_{i,j} + (n-k) \cdot  \frac{1 - \sum_{j'=1}^{k} a_{i,j'}}{n - k} \\
    &= \sum_{j=1}^k a_{i,j} + (1 - \sum_{j'=1}^{k} a_{i,j'})\\
    &= 1,
\end{align*}
where we used in the second equality that we sum over $(n-k)$ times the same value. We know that the columns of the matrix sum to 1 for all $j\le k$, since this is the case for $P$. For $j>k$ we have: 
\begin{align*}
    \sum_{i=1}^n a'_{i,j} & = \frac{1}{n-k} (\sum_{j=k}^n\sum_{i=1}^n a'_{i,j})\\
    & = \frac{1}{n-k}(n-\sum_{j=1}^k\sum_{i=1}^n a'_{i,j})\\ 
    & = \frac{n-k}{n-k} = 1.
\end{align*}
Here in the first equality we used that all columns from the $k$-th column are the same. In the second equality we used that since all rows are summing to $1$, the sum of all rows (and therefore also the sum of all columns) equals $n$. The last equality simply uses the fact that each of the first $k$ columns sums to $1$. 
\end{proof}

\noindent%
We use this Lemma to prove the generalized \acl{BvN} theorem. Let $k\le n$. 

\begin{theorem}
\label{theorem:extended_generalized_bvn}
Any matrix $P=\{a_{i,j}\}_{i\le n,j\le k}$
with $\forall i,j: 0 \le a_{i,j} \le 1$,  $\forall j: \sum_{i=1}^n a_{i,j} = 1$ and   $\forall i: \sum_{j=1}^k a_{i,j} \le 1$ can be written as the convex sum $P=\sum_{l=1}^m \alpha_l \cdot P_l$ of permutation matrices $P_l$ with coefficients $\alpha_l \in [0,1]$ such that $\sum_{l=1}^m \alpha_l=1$.  
\end{theorem}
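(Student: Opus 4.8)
The plan is to reduce the statement to the classical \acs{BvN} theorem for doubly stochastic matrices by passing through the square extension guaranteed by Lemma~\ref{theorem:long_extension_doubly_stochastic}, and then to transfer the resulting decomposition back down by restricting to the first $k$ columns. This mirrors the proof sketch already given for Theorem~\ref{theorem:generalized_bvn}, and my job is to make the two transfer steps precise.

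First I would apply Lemma~\ref{theorem:long_extension_doubly_stochastic} to the given $n\times k$ matrix $P$; the hypothesis $k\le n$ is exactly the regime the lemma covers. This yields an $n\times n$ matrix $P'$ with entries in $[0,1]$, with every row and every column summing to $1$, and with $P'\vert_{i\le n, j\le k}=P$. In other words, $P'$ is doubly stochastic and agrees with $P$ on the first $k$ columns. Next I would invoke the classical Birkhoff-von Neumann theorem~\citep{birkhoff1946tres} on $P'$: since $P'$ is doubly stochastic, it decomposes as a convex combination $P'=\sum_{l=1}^m \alpha_l P'_l$ of $n\times n$ permutation matrices $P'_l$, with $\alpha_l\in[0,1]$ and $\sum_{l=1}^m\alpha_l=1$.

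Finally I would restrict this identity to the first $k$ columns. Restriction to a fixed set of columns is a linear (coordinatewise) operation, so it commutes with the convex sum, giving
\begin{equation*}
P = P'\vert_{i\le n, j\le k} = \sum_{l=1}^m \alpha_l\, \bigl(P'_l\vert_{i\le n, j\le k}\bigr).
\end{equation*}
Setting $P_l := P'_l\vert_{i\le n, j\le k}$ produces the claimed decomposition: the coefficients $\alpha_l$ are unchanged and still sum to $1$, and each $P_l$ is obtained by deleting the last $n-k$ columns of a genuine permutation matrix, so it has exactly one $1$ in each of its $k$ columns and at most one $1$ in each of its $n$ rows. This is precisely the notion of permutation matrix in the $n\times k$ (top-$k$ ranking) setting, so the $P_l$ are admissible.

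I do not expect a serious obstacle, since Lemma~\ref{theorem:long_extension_doubly_stochastic} carries the real content and the classical \acs{BvN} theorem is used as a black box. The only points needing care are bookkeeping: checking that restriction distributes over the convex combination (immediate from working entrywise) and confirming that column-restrictions of permutation matrices are exactly the objects the theorem calls permutation matrices in the rectangular setting. If one wants the matrices in the final sum to be distinct, any $P_l$ that coincide after restriction can be merged by adding their coefficients, which leaves the convexity and the normalization $\sum_l \alpha_l = 1$ intact.
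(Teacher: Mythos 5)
Your proof is correct and takes essentially the same route as the paper's own argument: extend $P$ to a doubly stochastic $n\times n$ matrix via Lemma~\ref{theorem:long_extension_doubly_stochastic}, apply the classical Birkhoff--von Neumann theorem to the extension, and restrict the resulting convex combination to the first $k$ columns. The extra bookkeeping you supply (linearity of column restriction and the identification of restricted permutation matrices with rectangular permutation matrices) merely makes explicit what the paper's proof leaves implicit.
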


\begin{proof}
In Lemma \ref{theorem:long_extension_doubly_stochastic} we show that $P$ can be extended to a doubly stochastic matrix $P'$, i.e. $P= P'\vert_{i\le n, j\le k}$. 
For this matrix $P'$, the theorem by Birkhoff and von Neumann
states that we can find a decomposition into the convex sum of permutation matrices, 
$P' = \sum_{l=1}^m \alpha_l P'_l$,
with $\alpha_l \in [0,1]$,  $\sum_{l=1}^m \alpha_l=1$ and  $P'_l$  permutation matrices.  This induces a decomposition of the original matrix $P$: 
\begin{align*}
P = \sum_{l=1}^m \alpha_l P'_l\vert_{i\le n, j\le k}.\qquad \qedhere
\end{align*}
\end{proof}

\subsection{Complexity of the generalized \acl{BvN} algorithm}\label{sec:proofs_efficient}

In this section we prove the following claim from Section~\ref{sec:efficient_bvn_implementation}: 

\begin{theorem}
Using the modified top-$k$ algorithm for the generalized Birkhoff-von Neumann theorem, Algorithm~\ref{alg:gen_bvn}, a decomposition as described in Theorem~\ref{theorem:generalized_bvn} can be obtained with time complexity $\mathcal{O}(k^3 n^2)$.
\end{theorem}

\begin{proof}
The time complexity of Algorithm \ref{alg:gen_bvn} depends on the complexity of the adjusted Hopcroft-Karp algorithm (line \ref{alg1:line5}) and the number of times it needs to be executed (line \ref{alg1:line4}--\ref{alg1:line9}), which is equal to the number of permutation matrices in the decomposition.  \citet{hopcroft-1973-algorithm} show that the time complexity of the Hopcroft-Karp algorithm is $\mathcal{O}((m + l)\sqrt{l})$, where $l$ is the number of vertices and 
$m$ is the number of edges in the biparate graph.  
For the baseline approach we have $l = 2n$ and $m=n^2$,  therefore the complexity of the Hopcroft-Karp algorithm in this setting would be $\mathcal{O}(n^2\sqrt(n))$.  Using our approach instead, we have $l = n+(k+1)$ and $m=n\cdot (k+1)$ which reduces the complexity to $\mathcal{O}(kn\sqrt(n))$. Furthermore since the maximum length of each augmenting path is bounded by $2\cdot k$,  we can substitute the  $\sqrt{n}$ term with $k$ (see Corollary 2 and Theorem 3 of \cite{hopcroft-1973-algorithm}). This gives us a time complexity of $\mathcal{O}(k^2n)$ for the full matching algorithm. 
For the number of matrices in the decomposition,  \citet{johnson1960algorithm} define an upper bound of  $n^2-2n+2$ permutation matrices, which means that the total complexity of the Birkhoff-von Neumann algorithm equals $\mathcal{O}(n^4\sqrt{n})$.  Since for our  algorithm, a loose upper bound for the number of permutation matrices is $k\cdot n$, the algorithm proposed in this paper has a time complexity of only $\mathcal{O}(n^2k^3)$, which makes it much more feasible than the more naive algorithm proposed in Section \ref{sec:method_topk} for large values of $n$. 
\end{proof}

\clearpage

\bibliographystyle{ACM-Reference-Format}
\balance
\bibliography{references}


\begin{thebibliography}{46}


\ifx \showCODEN    \undefined \def \showCODEN     #1{\unskip}     \fi
\ifx \showDOI      \undefined \def \showDOI       #1{#1}\fi
\ifx \showISBNx    \undefined \def \showISBNx     #1{\unskip}     \fi
\ifx \showISBNxiii \undefined \def \showISBNxiii  #1{\unskip}     \fi
\ifx \showISSN     \undefined \def \showISSN      #1{\unskip}     \fi
\ifx \showLCCN     \undefined \def \showLCCN      #1{\unskip}     \fi
\ifx \shownote     \undefined \def \shownote      #1{#1}          \fi
\ifx \showarticletitle \undefined \def \showarticletitle #1{#1}   \fi
\ifx \showURL      \undefined \def \showURL       {\relax}        \fi
\providecommand\bibfield[2]{#2}
\providecommand\bibinfo[2]{#2}
\providecommand\natexlab[1]{#1}
\providecommand\showeprint[2][]{arXiv:#2}

\bibitem[\protect\citeauthoryear{Agarwal, Zaitsev, Wang, Li, Najork, and
  Joachims}{Agarwal et~al\mbox{.}}{2019}]%
        {agarwal2019estimating}
\bibfield{author}{\bibinfo{person}{Aman Agarwal}, \bibinfo{person}{Ivan
  Zaitsev}, \bibinfo{person}{Xuanhui Wang}, \bibinfo{person}{Cheng Li},
  \bibinfo{person}{Marc Najork}, {and} \bibinfo{person}{Thorsten Joachims}.}
  \bibinfo{year}{2019}\natexlab{}.
\newblock \showarticletitle{Estimating Position Bias without Intrusive
  Intervunentions}. In \bibinfo{booktitle}{\emph{Proceedings of the Twelfth ACM
  International Conference on Web Search and Data Mining}}.
  \bibinfo{pages}{474--482}.
\newblock


\bibitem[\protect\citeauthoryear{Ai, Bi, Luo, Guo, and Croft}{Ai
  et~al\mbox{.}}{2018}]%
        {ai2018unbiased}
\bibfield{author}{\bibinfo{person}{Qingyao Ai}, \bibinfo{person}{Keping Bi},
  \bibinfo{person}{Cheng Luo}, \bibinfo{person}{Jiafeng Guo}, {and}
  \bibinfo{person}{W~Bruce Croft}.} \bibinfo{year}{2018}\natexlab{}.
\newblock \showarticletitle{Unbiased Learning to Rank with Unbiased Propensity
  Estimation}. In \bibinfo{booktitle}{\emph{Proceedings of the 41st
  International ACM SIGIR Conference on Research and Development in Information
  Retrieval}}. \bibinfo{pages}{385--394}.
\newblock


\bibitem[\protect\citeauthoryear{Biega, Gummadi, and Weikum}{Biega
  et~al\mbox{.}}{2018}]%
        {biega2018equity}
\bibfield{author}{\bibinfo{person}{Asia~J Biega}, \bibinfo{person}{Krishna~P
  Gummadi}, {and} \bibinfo{person}{Gerhard Weikum}.}
  \bibinfo{year}{2018}\natexlab{}.
\newblock \showarticletitle{Equity of Attention: Amortizing Individual Fairness
  in Rankings}. In \bibinfo{booktitle}{\emph{Proceedings of the 41st
  International ACM SIGIR Conference on Research and Development in Information
  Retrieval}}. \bibinfo{pages}{405--414}.
\newblock


\bibitem[\protect\citeauthoryear{Birkhoff}{Birkhoff}{1940}]%
        {birkhoff-1940-lattice}
\bibfield{author}{\bibinfo{person}{Garrett Birkhoff}.}
  \bibinfo{year}{1940}\natexlab{}.
\newblock \bibinfo{booktitle}{\emph{Lattice Theory}}.
\newblock \bibinfo{publisher}{AMS}.
\newblock


\bibitem[\protect\citeauthoryear{Birkhoff}{Birkhoff}{1946}]%
        {birkhoff1946tres}
\bibfield{author}{\bibinfo{person}{Garrett Birkhoff}.}
  \bibinfo{year}{1946}\natexlab{}.
\newblock \showarticletitle{Tres observaciones sobre el algebra lineal}.
\newblock \bibinfo{journal}{\emph{Univ. Nac. Tucuman, Ser. A}}
  \bibinfo{volume}{5} (\bibinfo{year}{1946}), \bibinfo{pages}{147--154}.
\newblock


\bibitem[\protect\citeauthoryear{Cao, Qin, Liu, Tsai, and Li}{Cao
  et~al\mbox{.}}{2007}]%
        {cao2007learning}
\bibfield{author}{\bibinfo{person}{Zhe Cao}, \bibinfo{person}{Tao Qin},
  \bibinfo{person}{Tie-Yan Liu}, \bibinfo{person}{Ming-Feng Tsai}, {and}
  \bibinfo{person}{Hang Li}.} \bibinfo{year}{2007}\natexlab{}.
\newblock \showarticletitle{Learning to Rank: From Pairwise Approach to
  Listwise Approach}. In \bibinfo{booktitle}{\emph{Proceedings of the 24th
  international conference on Machine learning}}. \bibinfo{pages}{129--136}.
\newblock


\bibitem[\protect\citeauthoryear{Castillo}{Castillo}{2019}]%
        {castillo2019fairness}
\bibfield{author}{\bibinfo{person}{Carlos Castillo}.}
  \bibinfo{year}{2019}\natexlab{}.
\newblock \showarticletitle{Fairness and Transparency in Ranking}. In
  \bibinfo{booktitle}{\emph{ACM SIGIR Forum}}, Vol.~\bibinfo{volume}{52}.
  \bibinfo{pages}{64--71}.
\newblock


\bibitem[\protect\citeauthoryear{Celis, Straszak, and Vishnoi}{Celis
  et~al\mbox{.}}{2018}]%
        {celis2017ranking}
\bibfield{author}{\bibinfo{person}{L.~Elisa Celis}, \bibinfo{person}{Damian
  Straszak}, {and} \bibinfo{person}{Nisheeth~K. Vishnoi}.}
  \bibinfo{year}{2018}\natexlab{}.
\newblock \showarticletitle{Ranking with Fairness Constraints}. In
  \bibinfo{booktitle}{\emph{45th International Colloquium on Automata,
  Languages, and Programming (ICALP 2018)}}. \bibinfo{pages}{28:1–28:15}.
\newblock


\bibitem[\protect\citeauthoryear{Chen and Suh}{Chen and Suh}{2015}]%
        {chen-2015-top-k}
\bibfield{author}{\bibinfo{person}{Yuxin Chen} {and} \bibinfo{person}{Changho
  Suh}.} \bibinfo{year}{2015}\natexlab{}.
\newblock \showarticletitle{Top-K Ranking: An Information-theoretic
  Perspective}. In \bibinfo{booktitle}{\emph{2015 IEEE Information Theory
  Workshop - Fall (ITW)}}. \bibinfo{pages}{212--213}.
\newblock


\bibitem[\protect\citeauthoryear{Chuklin, Markov, and de~Rijke}{Chuklin
  et~al\mbox{.}}{2015}]%
        {chuklin-2015-click}
\bibfield{author}{\bibinfo{person}{Aleksandr Chuklin}, \bibinfo{person}{Ilya
  Markov}, {and} \bibinfo{person}{Maarten de Rijke}.}
  \bibinfo{year}{2015}\natexlab{}.
\newblock \bibinfo{booktitle}{\emph{Click Models for Web Search}}.
\newblock \bibinfo{publisher}{Morgan \& Claypool Publishers}.
\newblock


\bibitem[\protect\citeauthoryear{Deshpande, P, and Kummamuru}{Deshpande
  et~al\mbox{.}}{2008}]%
        {deshpande-2008-efficient}
\bibfield{author}{\bibinfo{person}{Prasad~M Deshpande}, \bibinfo{person}{Deepak
  P}, {and} \bibinfo{person}{Krishna Kummamuru}.}
  \bibinfo{year}{2008}\natexlab{}.
\newblock \showarticletitle{Efficient Online Top-K Retrieval with Arbitrary
  Similarity Measures}. In \bibinfo{booktitle}{\emph{Proceedings of the 11th
  International Conference on Extending Database Technology: Advances in
  Database Technology}}. \bibinfo{pages}{356--367}.
\newblock


\bibitem[\protect\citeauthoryear{Diaz, Mitra, Ekstrand, Biega, and
  Carterette}{Diaz et~al\mbox{.}}{2020}]%
        {diaz2020evaluating}
\bibfield{author}{\bibinfo{person}{Fernando Diaz}, \bibinfo{person}{Bhaskar
  Mitra}, \bibinfo{person}{Michael~D Ekstrand}, \bibinfo{person}{Asia~J Biega},
  {and} \bibinfo{person}{Ben Carterette}.} \bibinfo{year}{2020}\natexlab{}.
\newblock \showarticletitle{Evaluating Stochastic Rankings with Expected
  Exposure}. In \bibinfo{booktitle}{\emph{Proceedings of the 29th ACM
  International Conference on Information \& Knowledge Management}}.
  \bibinfo{pages}{275--284}.
\newblock


\bibitem[\protect\citeauthoryear{Dufoss{\'e}, Kaya, Panagiotas, and
  U{\c{c}}ar}{Dufoss{\'e} et~al\mbox{.}}{2018}]%
        {dufosse2018further}
\bibfield{author}{\bibinfo{person}{Fanny Dufoss{\'e}}, \bibinfo{person}{Kamer
  Kaya}, \bibinfo{person}{Ioannis Panagiotas}, {and} \bibinfo{person}{Bora
  U{\c{c}}ar}.} \bibinfo{year}{2018}\natexlab{}.
\newblock \showarticletitle{Further notes on Birkhoff--von Neumann
  decomposition of doubly stochastic matrices}.
\newblock \bibinfo{journal}{\emph{Linear Algebra Appl.}}  \bibinfo{volume}{554}
  (\bibinfo{year}{2018}), \bibinfo{pages}{68--78}.
\newblock


\bibitem[\protect\citeauthoryear{Ekstrand, Das, Burke, and Diaz}{Ekstrand
  et~al\mbox{.}}{2021}]%
        {ekstrand-2021-fairness}
\bibfield{author}{\bibinfo{person}{Michael~D. Ekstrand},
  \bibinfo{person}{Anubrata Das}, \bibinfo{person}{Robin Burke}, {and}
  \bibinfo{person}{Fernando Diaz}.} \bibinfo{year}{2021}\natexlab{}.
\newblock \showarticletitle{Fairness and Discrimination in Information Access
  Systems}.
\newblock \bibinfo{journal}{\emph{arXiv preprint arXiv:2105.05779}}
  (\bibinfo{year}{2021}).
\newblock


\bibitem[\protect\citeauthoryear{Ekstrand, McDonald, Raj, and Johnson}{Ekstrand
  et~al\mbox{.}}{2022}]%
        {trec-fair-ranking-2021}
\bibfield{author}{\bibinfo{person}{Michael~D. Ekstrand},
  \bibinfo{person}{Graham McDonald}, \bibinfo{person}{Amifa Raj}, {and}
  \bibinfo{person}{Isaac Johnson}.} \bibinfo{year}{2022}\natexlab{}.
\newblock \showarticletitle{Overview of the TREC 2021 Fair Ranking Track}. In
  \bibinfo{booktitle}{\emph{The Thirtieth Text REtrieval Conference (TREC 2021)
  Proceedings}}.
\newblock


\bibitem[\protect\citeauthoryear{Fang, Agarwal, and Joachims}{Fang
  et~al\mbox{.}}{2019}]%
        {fang2019intervention}
\bibfield{author}{\bibinfo{person}{Zhichong Fang}, \bibinfo{person}{Aman
  Agarwal}, {and} \bibinfo{person}{Thorsten Joachims}.}
  \bibinfo{year}{2019}\natexlab{}.
\newblock \showarticletitle{Intervention Harvesting for Context-Dependent
  Examination-Bias Estimation}. In \bibinfo{booktitle}{\emph{Proceedings of the
  42nd International ACM SIGIR Conference on Research and Development in
  Information Retrieval}}. \bibinfo{pages}{825--834}.
\newblock


\bibitem[\protect\citeauthoryear{Geyik, Ambler, and Kenthapadi}{Geyik
  et~al\mbox{.}}{2019}]%
        {geyik2019fairness}
\bibfield{author}{\bibinfo{person}{Sahin~Cem Geyik}, \bibinfo{person}{Stuart
  Ambler}, {and} \bibinfo{person}{Krishnaram Kenthapadi}.}
  \bibinfo{year}{2019}\natexlab{}.
\newblock \showarticletitle{Fairness-aware Ranking in Search \& Recommendation
  Systems with Application to Linkedin Talent Search}. In
  \bibinfo{booktitle}{\emph{Proceedings of the 25th ACM SIGKDD International
  Conference on Knowledge Discovery \& Data Mining}}.
  \bibinfo{pages}{2221--2231}.
\newblock


\bibitem[\protect\citeauthoryear{Hopcroft and Karp}{Hopcroft and Karp}{1973}]%
        {hopcroft-1973-algorithm}
\bibfield{author}{\bibinfo{person}{John~E. Hopcroft} {and}
  \bibinfo{person}{Richard~M. Karp}.} \bibinfo{year}{1973}\natexlab{}.
\newblock \showarticletitle{An $n^{5/2}$ Algorithm for Maximum Matchings in
  Bipartite Graphs}.
\newblock \bibinfo{journal}{\emph{SIAM J. Comput.}} \bibinfo{volume}{2},
  \bibinfo{number}{4} (\bibinfo{year}{1973}), \bibinfo{pages}{225--231}.
\newblock


\bibitem[\protect\citeauthoryear{Joachims}{Joachims}{2021}]%
        {joachims2021fairness}
\bibfield{author}{\bibinfo{person}{Thorsten Joachims}.}
  \bibinfo{year}{2021}\natexlab{}.
\newblock \showarticletitle{Fairness and Control of Exposure in Two-sided
  Markets}. In \bibinfo{booktitle}{\emph{Proceedings of the 2021 ACM SIGIR
  International Conference on Theory of Information Retrieval}}.
\newblock


\bibitem[\protect\citeauthoryear{Joachims, Swaminathan, and Schnabel}{Joachims
  et~al\mbox{.}}{2017}]%
        {joachims2017unbiased}
\bibfield{author}{\bibinfo{person}{Thorsten Joachims}, \bibinfo{person}{Adith
  Swaminathan}, {and} \bibinfo{person}{Tobias Schnabel}.}
  \bibinfo{year}{2017}\natexlab{}.
\newblock \showarticletitle{Unbiased Learning-to-Rank with Biased Feedback}. In
  \bibinfo{booktitle}{\emph{Proceedings of the Tenth ACM International
  Conference on Web Search and Data Mining}}. \bibinfo{pages}{781--789}.
\newblock


\bibitem[\protect\citeauthoryear{Johnson, Dulmage, and Mendelsohn}{Johnson
  et~al\mbox{.}}{1960}]%
        {johnson1960algorithm}
\bibfield{author}{\bibinfo{person}{Diane~M Johnson}, \bibinfo{person}{AL
  Dulmage}, {and} \bibinfo{person}{NS Mendelsohn}.}
  \bibinfo{year}{1960}\natexlab{}.
\newblock \showarticletitle{On an algorithm of G. Birkhoff concerning doubly
  stochastic matrices}.
\newblock \bibinfo{journal}{\emph{Canad. Math. Bull.}} \bibinfo{volume}{3},
  \bibinfo{number}{3} (\bibinfo{year}{1960}), \bibinfo{pages}{237--242}.
\newblock


\bibitem[\protect\citeauthoryear{Mehrotra, McInerney, Bouchard, Lalmas, and
  Diaz}{Mehrotra et~al\mbox{.}}{2018}]%
        {mehrotra2018towards}
\bibfield{author}{\bibinfo{person}{Rishabh Mehrotra}, \bibinfo{person}{James
  McInerney}, \bibinfo{person}{Hugues Bouchard}, \bibinfo{person}{Mounia
  Lalmas}, {and} \bibinfo{person}{Fernando Diaz}.}
  \bibinfo{year}{2018}\natexlab{}.
\newblock \showarticletitle{Towards a Fair Marketplace: Counterfactual
  Evaluation of the Trade-off between Relevance, Fairness \& Satisfaction in
  Recommendation Systems}. In \bibinfo{booktitle}{\emph{Proceedings of the 27th
  ACM International Conference on Information and Knowledge Management}}.
  \bibinfo{pages}{2243--2251}.
\newblock


\bibitem[\protect\citeauthoryear{Robertson and Zaragoza}{Robertson and
  Zaragoza}{2009}]%
        {robertson2009probabilistic}
\bibfield{author}{\bibinfo{person}{Stephen Robertson} {and}
  \bibinfo{person}{Hugo Zaragoza}.} \bibinfo{year}{2009}\natexlab{}.
\newblock \showarticletitle{The Probabilistic Relevance Framework: BM25 and
  Beyond}.
\newblock \bibinfo{journal}{\emph{Foundations and Trends in Information
  Retrieval}} (\bibinfo{year}{2009}), \bibinfo{pages}{333--389}.
\newblock


\bibitem[\protect\citeauthoryear{Sapiezynski, Zeng, E~Robertson, Mislove, and
  Wilson}{Sapiezynski et~al\mbox{.}}{2019}]%
        {sapiezynski2019quantifying}
\bibfield{author}{\bibinfo{person}{Piotr Sapiezynski}, \bibinfo{person}{Wesley
  Zeng}, \bibinfo{person}{Ronald E~Robertson}, \bibinfo{person}{Alan Mislove},
  {and} \bibinfo{person}{Christo Wilson}.} \bibinfo{year}{2019}\natexlab{}.
\newblock \showarticletitle{Quantifying the Impact of User Attention on Fair
  Group Representation in Ranked Lists}. In \bibinfo{booktitle}{\emph{Companion
  Proceedings of The 2019 World Wide Web Conference}}.
  \bibinfo{pages}{553--562}.
\newblock


\bibitem[\protect\citeauthoryear{Sarvi, Heuss, Aliannejadi, Schelter, and
  de~Rijke}{Sarvi et~al\mbox{.}}{2022}]%
        {sarvi-2021-understanding-arxiv}
\bibfield{author}{\bibinfo{person}{Fatemeh Sarvi}, \bibinfo{person}{Maria
  Heuss}, \bibinfo{person}{Mohammad Aliannejadi}, \bibinfo{person}{Sebastian
  Schelter}, {and} \bibinfo{person}{Maarten de Rijke}.}
  \bibinfo{year}{2022}\natexlab{}.
\newblock \showarticletitle{Understanding and Mitigating the Effect of Outliers
  in Fair Ranking}. In \bibinfo{booktitle}{\emph{Proceedings of the Fifteenth
  ACM International Conference on Web Search and Data Mining}}.
  \bibinfo{pages}{861--869}.
\newblock


\bibitem[\protect\citeauthoryear{Singh and Joachims}{Singh and
  Joachims}{2017}]%
        {singh2017equality}
\bibfield{author}{\bibinfo{person}{Ashudeep Singh} {and}
  \bibinfo{person}{Thorsten Joachims}.} \bibinfo{year}{2017}\natexlab{}.
\newblock \showarticletitle{Equality of Opportunity in Rankings}. In
  \bibinfo{booktitle}{\emph{Workshop on Prioritizing Online Content (WPOC) at
  NIPS}}. \bibinfo{pages}{31}.
\newblock


\bibitem[\protect\citeauthoryear{Singh and Joachims}{Singh and
  Joachims}{2018}]%
        {singh2018fairness}
\bibfield{author}{\bibinfo{person}{Ashudeep Singh} {and}
  \bibinfo{person}{Thorsten Joachims}.} \bibinfo{year}{2018}\natexlab{}.
\newblock \showarticletitle{Fairness of Exposure in Rankings}. In
  \bibinfo{booktitle}{\emph{Proceedings of the 24th ACM SIGKDD International
  Conference on Knowledge Discovery \& Data Mining}}.
  \bibinfo{pages}{2219--2228}.
\newblock


\bibitem[\protect\citeauthoryear{Singh and Joachims}{Singh and
  Joachims}{2019}]%
        {singh2019policy}
\bibfield{author}{\bibinfo{person}{Ashudeep Singh} {and}
  \bibinfo{person}{Thorsten Joachims}.} \bibinfo{year}{2019}\natexlab{}.
\newblock \showarticletitle{Policy Learning for Fairness in Ranking}. In
  \bibinfo{booktitle}{\emph{Advances in Neural Information Processing
  Systems}}, Vol.~\bibinfo{volume}{32}.
\newblock


\bibitem[\protect\citeauthoryear{Singh, Kempe, and Joachims}{Singh
  et~al\mbox{.}}{2021}]%
        {singh2021fairness}
\bibfield{author}{\bibinfo{person}{Ashudeep Singh}, \bibinfo{person}{David
  Kempe}, {and} \bibinfo{person}{Thorsten Joachims}.}
  \bibinfo{year}{2021}\natexlab{}.
\newblock \showarticletitle{Fairness in Ranking under Uncertainty}. In
  \bibinfo{booktitle}{\emph{Advances in Neural Information Processing
  Systems}}, Vol.~\bibinfo{volume}{34}. \bibinfo{publisher}{Curran Associates,
  Inc.}, \bibinfo{pages}{11896--11908}.
\newblock


\bibitem[\protect\citeauthoryear{Tao, Wang, Mei, and Zhai}{Tao
  et~al\mbox{.}}{2006}]%
        {tao2006language}
\bibfield{author}{\bibinfo{person}{Tao Tao}, \bibinfo{person}{Xuanhui Wang},
  \bibinfo{person}{Qiaozhu Mei}, {and} \bibinfo{person}{ChengXiang Zhai}.}
  \bibinfo{year}{2006}\natexlab{}.
\newblock \showarticletitle{Language Model Information Retrieval with Document
  Expansion}. In \bibinfo{booktitle}{\emph{Proceedings of the Human Language
  Technology Conference of the NAACL, Main Conference}}.
  \bibinfo{pages}{407--414}.
\newblock


\bibitem[\protect\citeauthoryear{Vardasbi, de~Rijke, and Markov}{Vardasbi
  et~al\mbox{.}}{2021}]%
        {vardasbi-2021-mixture-based}
\bibfield{author}{\bibinfo{person}{Ali Vardasbi}, \bibinfo{person}{Maarten de
  Rijke}, {and} \bibinfo{person}{Ilya Markov}.}
  \bibinfo{year}{2021}\natexlab{}.
\newblock \showarticletitle{Mixture-Based Correction for Position and Trust
  Bias in Counterfactual Learning to Rank}. In
  \bibinfo{booktitle}{\emph{Proceedings of the 30th ACM International
  Conference on Information \& Knowledge Management}}.
  \bibinfo{pages}{1869–1878}.
\newblock


\bibitem[\protect\citeauthoryear{Vardasbi, Oosterhuis, and de~Rijke}{Vardasbi
  et~al\mbox{.}}{2020}]%
        {vardasbi-2020-inverse}
\bibfield{author}{\bibinfo{person}{Ali Vardasbi}, \bibinfo{person}{Harrie
  Oosterhuis}, {and} \bibinfo{person}{Maarten de Rijke}.}
  \bibinfo{year}{2020}\natexlab{}.
\newblock \showarticletitle{When Inverse Propensity Scoring does not Work:
  Affine Corrections for Unbiased Learning to Rank}. In
  \bibinfo{booktitle}{\emph{Proceedings of the 29th ACM International
  Conference on Information \& Knowledge Management}}.
  \bibinfo{pages}{1475--1484}.
\newblock


\bibitem[\protect\citeauthoryear{Vardasbi, Sarvi, and de~Rijke}{Vardasbi
  et~al\mbox{.}}{2022}]%
        {vardasbi-2022-fairness}
\bibfield{author}{\bibinfo{person}{Ali Vardasbi}, \bibinfo{person}{Fatemeh
  Sarvi}, {and} \bibinfo{person}{Maarten de Rijke}.}
  \bibinfo{year}{2022}\natexlab{}.
\newblock \showarticletitle{Probabilistic Permutation Graph Search: Black-Box
  Optimization for Fairness in Ranking}. In
  \bibinfo{booktitle}{\emph{Proceedings of the 45th International ACM SIGIR
  Conference on Research and Development in Information Retrieval}}.
  \bibinfo{publisher}{ACM}.
\newblock


\bibitem[\protect\citeauthoryear{Wang and Joachims}{Wang and Joachims}{2021}]%
        {wang2020fairness}
\bibfield{author}{\bibinfo{person}{Lequn Wang} {and} \bibinfo{person}{Thorsten
  Joachims}.} \bibinfo{year}{2021}\natexlab{}.
\newblock \showarticletitle{User Fairness, Item Fairness, and Diversity for
  Rankings in Two-Sided Markets}. In \bibinfo{booktitle}{\emph{Proceedings of
  the 2021 ACM SIGIR International Conference on Theory of Information
  Retrieval}}. \bibinfo{pages}{23--41}.
\newblock


\bibitem[\protect\citeauthoryear{Wang, Feng, He, Zhang, and Chua}{Wang
  et~al\mbox{.}}{2021}]%
        {wang2021clicks}
\bibfield{author}{\bibinfo{person}{Wenjie Wang}, \bibinfo{person}{Fuli Feng},
  \bibinfo{person}{Xiangnan He}, \bibinfo{person}{Hanwang Zhang}, {and}
  \bibinfo{person}{Tat-Seng Chua}.} \bibinfo{year}{2021}\natexlab{}.
\newblock \showarticletitle{Clicks can be Cheating: Counterfactual
  Recommendation for Mitigating Clickbait Issue}. In
  \bibinfo{booktitle}{\emph{Proceedings of the 44th International ACM SIGIR
  Conference on Research and Development in Information Retrieval}}.
  \bibinfo{pages}{1288--1297}.
\newblock


\bibitem[\protect\citeauthoryear{Wang, Bendersky, Metzler, and Najork}{Wang
  et~al\mbox{.}}{2016}]%
        {wang2016learning}
\bibfield{author}{\bibinfo{person}{Xuanhui Wang}, \bibinfo{person}{Michael
  Bendersky}, \bibinfo{person}{Donald Metzler}, {and} \bibinfo{person}{Marc
  Najork}.} \bibinfo{year}{2016}\natexlab{}.
\newblock \showarticletitle{Learning to Rank with Selection Bias in Personal
  Search}. In \bibinfo{booktitle}{\emph{Proceedings of the 39th International
  ACM SIGIR Conference on Research and Development in Information Retrieval}}.
  \bibinfo{pages}{115--124}.
\newblock


\bibitem[\protect\citeauthoryear{Wang, Golbandi, Bendersky, Metzler, and
  Najork}{Wang et~al\mbox{.}}{2018}]%
        {wang2018position}
\bibfield{author}{\bibinfo{person}{Xuanhui Wang}, \bibinfo{person}{Nadav
  Golbandi}, \bibinfo{person}{Michael Bendersky}, \bibinfo{person}{Donald
  Metzler}, {and} \bibinfo{person}{Marc Najork}.}
  \bibinfo{year}{2018}\natexlab{}.
\newblock \showarticletitle{Position Bias Estimation for Unbiased Learning to
  Rank in Personal Search}. In \bibinfo{booktitle}{\emph{Proceedings of the
  Eleventh ACM International Conference on Web Search and Data Mining}}.
  \bibinfo{pages}{610--618}.
\newblock


\bibitem[\protect\citeauthoryear{Yadav, Du, and Joachims}{Yadav
  et~al\mbox{.}}{2021}]%
        {yadav2021policy}
\bibfield{author}{\bibinfo{person}{Himank Yadav}, \bibinfo{person}{Zhengxiao
  Du}, {and} \bibinfo{person}{Thorsten Joachims}.}
  \bibinfo{year}{2021}\natexlab{}.
\newblock \showarticletitle{Policy-Gradient Training of Fair and Unbiased
  Ranking Functions}. In \bibinfo{booktitle}{\emph{Proceedings of the 44th
  International ACM SIGIR Conference on Research and Development in Information
  Retrieval}}. \bibinfo{pages}{1044--1053}.
\newblock


\bibitem[\protect\citeauthoryear{Yang and Stoyanovich}{Yang and
  Stoyanovich}{2017}]%
        {yang2017measuring}
\bibfield{author}{\bibinfo{person}{Ke Yang} {and} \bibinfo{person}{Julia
  Stoyanovich}.} \bibinfo{year}{2017}\natexlab{}.
\newblock \showarticletitle{Measuring Fairness in Ranked Outputs}. In
  \bibinfo{booktitle}{\emph{Proceedings of the 29th International Conference on
  Scientific and Statistical Database Management}}. Article
  \bibinfo{articleno}{22}, \bibinfo{numpages}{6}~pages.
\newblock


\bibitem[\protect\citeauthoryear{Yang, Steck, Guo, and Liu}{Yang
  et~al\mbox{.}}{2012}]%
        {yang-2012-top-k}
\bibfield{author}{\bibinfo{person}{Xiwang Yang}, \bibinfo{person}{Harald
  Steck}, \bibinfo{person}{Yang Guo}, {and} \bibinfo{person}{Yong Liu}.}
  \bibinfo{year}{2012}\natexlab{}.
\newblock \showarticletitle{On Top-k Recommendation Using Social Networks}. In
  \bibinfo{booktitle}{\emph{Proceedings of the Sixth ACM Conference on
  Recommender Systems}}. \bibinfo{pages}{67--74}.
\newblock


\bibitem[\protect\citeauthoryear{Yue, Patel, and Roehrig}{Yue
  et~al\mbox{.}}{2010}]%
        {yue2010beyond}
\bibfield{author}{\bibinfo{person}{Yisong Yue}, \bibinfo{person}{Rajan Patel},
  {and} \bibinfo{person}{Hein Roehrig}.} \bibinfo{year}{2010}\natexlab{}.
\newblock \showarticletitle{Beyond Position Bias: Examining Result
  Attractiveness as a Source of Presentation Bias in Clickthrough Data}. In
  \bibinfo{booktitle}{\emph{Proceedings of the 19th International Conference on
  World Wide Web}}. \bibinfo{pages}{1011--1018}.
\newblock


\bibitem[\protect\citeauthoryear{Zehlike, Bonchi, Castillo, Hajian, Megahed,
  and Baeza-Yates}{Zehlike et~al\mbox{.}}{2017}]%
        {zehlike2017fa}
\bibfield{author}{\bibinfo{person}{Meike Zehlike}, \bibinfo{person}{Francesco
  Bonchi}, \bibinfo{person}{Carlos Castillo}, \bibinfo{person}{Sara Hajian},
  \bibinfo{person}{Mohamed Megahed}, {and} \bibinfo{person}{Ricardo
  Baeza-Yates}.} \bibinfo{year}{2017}\natexlab{}.
\newblock \showarticletitle{FA*IR: A Fair Top-k Ranking Algorithm}. In
  \bibinfo{booktitle}{\emph{Proceedings of the 2017 ACM on Conference on
  Information and Knowledge Management}}. \bibinfo{pages}{1569--1578}.
\newblock


\bibitem[\protect\citeauthoryear{Zehlike and Castillo}{Zehlike and
  Castillo}{2020}]%
        {zehlike2020reducing}
\bibfield{author}{\bibinfo{person}{Meike Zehlike} {and} \bibinfo{person}{Carlos
  Castillo}.} \bibinfo{year}{2020}\natexlab{}.
\newblock \showarticletitle{Reducing Disparate Exposure in Ranking: A Learning
  To Rank Approach}. In \bibinfo{booktitle}{\emph{Proceedings of The Web
  Conference 2020}}. \bibinfo{pages}{2849--2855}.
\newblock


\bibitem[\protect\citeauthoryear{Zehlike, S{\"u}hr, Baeza-Yates, Bonchi,
  Castillo, and Hajian}{Zehlike et~al\mbox{.}}{2022}]%
        {zehlike2022fair}
\bibfield{author}{\bibinfo{person}{Meike Zehlike}, \bibinfo{person}{Tom
  S{\"u}hr}, \bibinfo{person}{Ricardo Baeza-Yates}, \bibinfo{person}{Francesco
  Bonchi}, \bibinfo{person}{Carlos Castillo}, {and} \bibinfo{person}{Sara
  Hajian}.} \bibinfo{year}{2022}\natexlab{}.
\newblock \showarticletitle{Fair Top-k Ranking with multiple protected groups}.
\newblock \bibinfo{journal}{\emph{Information Processing \& Management}}
  \bibinfo{volume}{59}, \bibinfo{number}{1} (\bibinfo{year}{2022}),
  \bibinfo{pages}{102707}.
\newblock


\bibitem[\protect\citeauthoryear{Zehlike, Yang, and Stoyanovich}{Zehlike
  et~al\mbox{.}}{2021}]%
        {zehlike2021fairness}
\bibfield{author}{\bibinfo{person}{Meike Zehlike}, \bibinfo{person}{Ke Yang},
  {and} \bibinfo{person}{Julia Stoyanovich}.} \bibinfo{year}{2021}\natexlab{}.
\newblock \showarticletitle{Fairness in Ranking: A Survey}.
\newblock \bibinfo{journal}{\emph{arXiv preprint arXiv:2103.14000}}
  (\bibinfo{year}{2021}).
\newblock


\bibitem[\protect\citeauthoryear{Zhai and Lafferty}{Zhai and Lafferty}{2001}]%
        {zhai2001study}
\bibfield{author}{\bibinfo{person}{Chengxiang Zhai} {and} \bibinfo{person}{John
  Lafferty}.} \bibinfo{year}{2001}\natexlab{}.
\newblock \showarticletitle{A Study of Smoothing Methods for Language Models
  Applied to Ad Hoc Information Retrieval}. In
  \bibinfo{booktitle}{\emph{Proceedings of the 24th Annual International ACM
  SIGIR Conference on Research and Development in Information Retrieval}}.
  \bibinfo{pages}{334--342}.
\newblock


\end{thebibliography}
\clearpage 

\end{document}